\theoremstyle{plain}
\newtheorem{theorem}{Theorem}[section]
\newtheorem{proposition}[theorem]{Proposition}
\newtheorem{remark}[theorem]{Remark}
\newcommand{\notiz}[1]{\relax}
\newcommand{\zitep}[1]{\relax}
\newcommand{\1}{\mathds 1}            
\newcommand{\Price}[1][]{
		\ifthenelse{\equal{#1}{}}{\mathit{Price}}{\Price{}^{#1}}
	} 
\newlength{\wordlength}
\newcommand{\ul}{\underline}
\newcommand{\ol}{\overline}
\newcommand{\RR}{\mathbb{R}}
\newcommand{\EE}{\mathbb{E}}
\newcommand{\dy}{\text{d}y}
\renewcommand{\cite}{\citet}
\numberwithin{equation}{section}
\numberwithin{figure}{section}
\numberwithin{table}{section}
\begin{document}
\title{Speed-up credit exposure calculations for pricing and risk management}

\bigskip
\author{\textbf{Kathrin Glau$\vphantom{l}^{1}$,} \textbf{Ricardo Pachon$\vphantom{l}^{3}$,} \textbf{Christian P{\"o}tz$\vphantom{l}^{1}$
}
\\\\$\vphantom{l}^{\text{1}}$Queen Mary University of London, UK\\
$\vphantom{l}^{\text{3}}$Credit Suisse, UK
}

\maketitle
\begin{abstract}
We introduce a new method to calculate the credit exposure of European and path-dependent options. The proposed method is able to calculate accurate expected exposure and potential future exposure profiles under the risk-neutral and the real-world measure. Key advantage of is that it delivers an accuracy comparable to a full re-evaluation and at the same time it is faster than a regression-based method. Core of the approach is solving a dynamic programming problem by function approximation. This yields a closed form approximation along the paths together with the option's delta and gamma. The simple structure allows for highly efficient evaluation of the exposures, even for a large number of simulated paths. The approach is flexible in the model choice, payoff profiles and asset classes. We validate the accuracy of the method numerically for three different equity products and a Bermudan interest rate swaption. Benchmarking against the popular least-squares Monte Carlo approach shows that our method is able to deliver a higher accuracy in a faster runtime.
\end{abstract}

\textbf{Keywords}
	Path-dependent options, Bermudan swaption, Credit exposure, Full re-evaluation, Function approximation
	
\noindent\textbf{2010 MSC} 91G60, 41A10  

\section{Introduction}
The credit exposure resulting from two counterparties facing each other on a derivatives deal is the main input in a growing list of calculations, all crucial since the financial crisis of 2007--2008. Credit exposures are used to estimate, for example, counterparty credit risk (and consequently the regulatory capital of financial firms), initial margins of collateralized trades, Credit Valuation Adjustments (CVA), Debit Valuation Adjustments (DVA) and, more recently, Funding Valuation Adjustments (FVA).

The exposure of a trade at time $t$ is defined as
\[
E_t(X_t) =\max\{V_t(X_t),0\},
\]
where $X_t$ is the risk factor that drives the price $V_t$ at time $t$ of a portfolio of derivatives. In essence, the credit exposure calculation projects forward in time the distributions of relevant underlying assets, which follow appropriate stochastic models, and obtains the associated distributions of the values of the derivatives in scope, up to their longest maturity. The specifics of this calculation vary with each application. For example, for CVA and DVA the calculation is performed at netting set  while for FVA is done at portfolio level. For CVA, negative exposures are floored to zero before taking a discounted average under the risk-neutral pricing measure $\mathbb{Q}$. In contrast, in order to quantify credit risk, one needs to assess the distribution of the exposure $E_{t}(X_{t})$ under the real-world measure $\mathbb{P}$. For instance, the upper quantiles at the level of $95$, $97.5$ or $99\%$ are standard quantities in risk management.

The mentioned distributions are usually obtained through Monte Carlo simulation: On some chosen time points, the derivatives are re-evaluated on various scenarios, randomly drawn from the distribution of the underlying asset, and from the resulting distribution the required metric is extracted. See \cite{Gregory2010} and \cite{Green2015} for an overview of credit exposure and its calculation. The crux of the calculation is the repeatedly call of the pricers which can be computationally expensive. When their is no closed form solution for the price of the derivative, e.g. for path-dependent options, a straightforward approach would lead to nested Monte Carlo simulations. Moreover, a often a high number of scenario simulations is required to obtain stable results, precisely for tail distributions. In credit risk management an additional challenge arises from the change of measure, i.e. scenarios are generated under the real-world measure, nonetheless pricing is done under the risk-neutral measure. Hence, additionally to simulating the paths of the underlying under $\mathbb{Q}$ the scenario paths need to be simulated under $\mathbb{P}$. A naive simplification would be to assemble the risk quantities also under the pricing measure $\mathbb{Q}$. As reported in \cite{Stein2016}, \textit{"since the banks are already heavily invested in CVA calculations, it is becoming popular to take this shortcut"}. The analysis of \cite{Stein2016} clearly shows the perils of this approach and emphasis the importance of calculating credit risk quantities under the real-world measure.

In the literature regression based methods are studied in order to avoid nested Monte Carlo simulation, see for instance \cite{Schoeftner2008}, who calculate the exposure and CVA for derivatives without analytic solution (e.g. Bermudan options) based on a modification of the least-squares Monte Carlo approach of \cite{LongstaffSchwartz2001}. For the exposure calculation under the real-world measure in a Black-Scholes type model a change of measure using the Radon–Nikodym density is employed. Furthermore, \cite{KarlssonJainOosterlee2016} and \cite{FengJainKarlssonKandhaiOosterlee2016} apply the stochastic grid bundling method (SGBM) of \cite{JainOosterlee2015} to credit exposure calculation and compare it to a least-squares Monte Carlo algorithm. Their comparison reveals severe deficiencies of the L-S approach. Namely, the L-S price introduces numerical noise that leads to inaccurate exposure, especially in its tail distribution. While the bundling technique in the SGBM is able to reduce the Monte Carlo noise and produce more accurate results, it comes at a significant higher cost. A different method is investigated in \cite{ShenWeideAnderluh2013}, who calculate the exposure for Bermudan options on one asset, based on the COS method for early-exercise options of \cite{FangOosterlee2009}. The method produces accurate results under $\mathbb{Q}$ and $\mathbb{P}$ without any change of measure. However, due to its higher runtimes it is mostly suitable for benchmarking.

In this article we propose a new approach to efficiently compute credit exposures of path-dependent options under both, the risk-neutral and the real-world measure. Our ansatz is based on the dynamic programming formulation of the pricing problem. In each step of the backward time-stepping we approximate the price function by a weighted sum of basis functions as proposed in \cite{GlauMahlstedtPoetz2019}. In the latter article it is shown that highly accurate and fast prices can be obtained by this approach based on a suitable approximation technique such as Chebyshev polynomial interpolation. The approximation of option prices by Chebyshev interpolation has some outstanding qualities: it can be quickly constructed from a few evaluations on a grid of asset values; it is robust and efficient to evaluate; and its accuracy can be tuned even for high orders. More generally, in recent years the promising properties of Chebyshev interpolation have been exploited in several areas, see \cite{Trefethen2013} and the \textit{chebfun} project at \textit{www.chebfun.org}.

Our numerical investigation confirms that the proposed method is able to produce accurate exposure profiles under the risk-neutral and the real-world measure. One major advantage of the approach is that it applies to a large variety of products and models, namely, European and path-dependent options in different asset classes. More specifically, in our numerical experiments we validate the method for three different equity products (European, barrier and Bermudan option) and a Bermuda interest rate swaption. As models we cosider the Black-Scholes and the Merton jump-diffusion stock price models and the Hull-White short rate model. We benchmark our method against a least-squares Monte Carlo approach. The numerical comparison reveals that the proposed method is able to deliver a higher accuracy in an even faster runtime. Comparison with a full re-evaluation shows that the error for both, the expected exposure and the potential future exposure is negligible in relation to the scenario simulation error. 

To summarize, the proposed method combines the accuracy of a full re-evaluation with a speed even faster than regression based methods. Therefore, replacing a least-squares Monte Carlo approach by the proposed method enable a considerable more precise quantification of counterparty credit risk. On the level of a whole trading book this will lead to reliable counterparty risk estimates in a reasonable computing time. For an individual bank accurate assessment of counterparty risk results in lower capital requirements. From the regulator's perspective this reduces systematic risk in the banking sector.

The structure of this paper is as follows. In Section 2, we present the definitions of credit exposure for pricing and risk management. In Section 3 we introduce the new approach and we provide algorithms for the exposure calculation under the risk-neutral and the real-world measure and discuss implementational aspects. Section 4 is devoted to the numerical experiments and Section 5 provides a conclusion and outlook.

\section{Credit exposure for pricing and risk management}\label{sec:credit_exposure}
For risk and capital calculation purposes, the expected exposure (EE) is defined as
\begin{equation}\label{eq:EE_risk}
EE_0^{{risk}}(t) = \mathbb{E}^\mathbb{P}(\max(V_t,0)|\mathcal{F}_0),
\end{equation}
where $\mathbb{P}$ refers to the real-world measure, $\mathcal{F}_0$ is the filtration at $t=0$, and $V_t$ is the value of the derivative at time $t$. The potential future exposure of derivative is defined as
\begin{align}\label{eq:PFE_risk}
PFE_0^{risk} (t) =\inf \{y: \mathbb{P}(\max(V_t,0)\leq y)\geq \alpha\}.
\end{align}
for a level $\alpha\in(0,1)$. The class of path-dependent derivatives that we consider in this paper are characterised by a set of exercise dates $t_0,\ldots,t_n = T$, and the value function $V_{t_u}(x)$ of the form
\begin{align}\label{eq:dpp_f}
\begin{split}
	V_T(x) & = g(x),\\
V_{t_u}(x) & = f\Bigl(g(t_u,x),\mathbb{E}^{\mathbb{Q}}[D(t_u,t_{u+1})V_{t_{u+1}}(X_{t_{u+1}})|X_{t_u} = x]\Bigr),
\end{split}
\end{align}
where $f:\mathbb{R}\times\mathbb{R}\rightarrow\mathbb{R}$ is a Lipschitz continuous function, and $g:[0,T]\times \mathbb{R}\rightarrow\mathbb{R}$, with $g(T,x) = g(x)$. Here $X_t$ is the underlying risk factor, and $D(t_u,t_{u+1}) = B_{t_u}/B_{t_{u+1}}$ is the discount factor between $t_u$ and $t_{u+1}$, where $B(t)$ is the bank account
\begin{align}\label{eq:bank_acc}
B(t)=B(0)\exp\Big(\int_{0}^{t}r(s)\text{d}s\Big) \quad \text{with}\quad B(0)=1.
\end{align}
with $r$ the money markets continuously compounded interest rate, and $t<T$. Among the derivatives that can be expressed in the above form, we highlight three that we will use to test our methodology: Classical European options, early-exercise options (Bermudan options) and barrier options.\\

\textbf{Bermudan options:}\\
In this case the value function is given as
\begin{align*}
V_{t_{u}}(x)=\max\left\{g(x),\EE^{\mathbb{Q}}\left[D(t_{u},t_{u+1})V_{t_{u+1}}(X_{t_{u+1}})\vert X_{t_{u}}=x\right]\right\}.
\end{align*}

\textbf{European options:}\\
European options correspond to Bermudan options with no early exercise. In this case the value function becomes
\begin{align*}
V_{t_{u}}(x)=\EE^{\mathbb{Q}}\left[D(t_{u},t_{u+1})V_{t_{u+1}}(X_{t_{u+1}})\vert X_{t_{u}}=x\right].
\end{align*}

\textbf{Barrier options:}\\
Discretely monitored up-and-out barrier option with barrier $B$ can be written in the same form with value function
\begin{align*}
V_{t_{u}}(x)=\EE^{\mathbb{Q}}\left[D(t_{u},t_{u+1})V_{t_{u+1}}(X_{t_{u+1}})\vert X_{t_{u}}=x\right]\1_{x\leq B}.
\end{align*}
Similarly, we can use the framework for down-and-out barrier options.\\

The expected exposure also appears when pricing the basis between the counterparty risk-free value of a trade and its valuation when accounting for counterparty risk. This difference arises from the risk that a trade is in favour of one counterparty but the other one defaults before the trade matures. This Credit Valuation Adjustment (CVA) is equivalent to the price of a contingent CDS, whose value follows from the fundamental arbitrage theorem: 
\[
\frac{\text{CVA}_0}{B(0)} = \mathbb{E}^{\mathbb{Q}}\Biggl[ \int_{s=0}^{s=T} \frac{\max(V_s,0)\cdot d\mathbf{1}_{(\tau\leq s)}}{B(s)}\Biggr]=\int_{s=0}^{s=T} \mathbb{E}^{\mathbb{Q}} \Biggl[\frac{\max(V_s,0)\cdot d\mathbf{1}_{(\tau\leq s)}}{B(s)}\Biggr],
\]
where $\mathbb{Q}$ is the associated risk-neutral measure and $\mathbf{1}_{(\tau\leq s)}$ is the default indicator for the counterparty which equals 1 if $s$ is less than the default time $\tau$ and 0 otherwise. The integral over time can be discretized over time buckets, and in the special case that the value of the derivative and the default event are independent, the expectation can be expressed as the product of two terms, one accounting exclusively for the default probability and the other one for the positive exposure of the trade. This exposure is calculated as
\begin{equation}\label{eq:EE_price}
EE_0^{price}(t) = \mathbb{E}^\mathbb{Q} \Bigl(\frac{\max(V_t,0)}{B(t)}\Bigl|\mathcal{F}_0\Bigr) = D(0,t) \mathbb{E}^\mathbb{Q} (\max(V_t,0)|\mathcal{F}_0) ,
\end{equation}
assuming that $B(0)=1$. Moreover, we define the $\mathbb{Q}$-counterpart of $PFE^{risk}$ as  
\begin{align}\label{eq:PFE_risk}
PFE_0^{price} (t) =\inf \{y: \mathbb{Q}(D(0,t)\max(V_t,0)\leq y)\geq \alpha\}.
\end{align}
The differences between the risk and the pricing exposures, i.e., expressions (\ref{eq:EE_risk}) and (\ref{eq:EE_price}), is that the former uses the real-world measure for diffusing the risk factors, while the later uses the risk-neutral measure (the pricing of $V_t$ in both cases, of course, uses $\mathbb{Q}$).  Additionally, for pricing exposures we also incorporate a discount factor at time point $t$. As we will see in Section \ref{seq:exposure_pricing} and \ref{seq:exposure_risk}, the structure of our methodology does not change much when calculating either one of them. 

\section{A unified approach for exposure calculation}
In this section we presented a unified approach for the calculation of credit exposure for different types of path-dependent options. The core idea of our approach is to write the option price as a solution of a Dynamic Programming problem and to approximate the solution with a suitable set of basis functions. The proposed ansatz is based on the dynamic Chebyshev algorithm of \cite{GlauMahlstedtPoetz2019}. This method was presented as a pricing method and can be very easily extended to calculate expected exposures of options.

\subsection{Calculating credit exposures using dynamic programming}
For many (portfolios of) derivatives the expected exposure as defined in \eqref{eq:EE} cannot be calculated analytically and simulation approaches come into play. The risk factors $X_{t}^{i}$, $i=1,\ldots,M$ are simulated and the expected exposure is approximated by
\begin{align*}
EE_{t}(x)=\EE^{\mathbb{P}}[\max\{V_{t}(X_{t}),0\}]\approx\frac{1}{M}\sum_{i=1}^{M}\max\{V_{t}(X_{t}^{i}),0\}.
\end{align*}
Hence, the values $V_{t}(X_{t}^{i})$ of the derivative have to be calculated for a large number $M$ of simulated risk factors. Typically, there is no analytic solution available and the evaluation becomes computationally demanding. This is especially the case when the value function $V_{t}$ at time point $t$ depends on the conditional expectation of the value function at $t+1$. 

In order to address this issue we propose to approximate the function $x\mapsto V_{t}(x)$ with a weighted sum of basis functions, i.e.
\begin{align*}
V_{t}(x)\approx\widehat{V}_{t}(x)=\sum_{j=0}^{N}c_{j}p_{j}(x)
\end{align*}
with weights/coefficients $c_{j}$. Then we replace the value function with its approximation in the exposure calculation
\begin{align}\label{eq:EE_calc_Cheb_approx}
EE_{t}(x)=\EE^{\mathbb{P}}[\max\{V_{t}(X_{t}),0\}]\approx\frac{1}{M}\sum_{i=1}^{M}\max\{\widehat{V}_{t}(X_{t}^{i}),0\}.
\end{align}
Even for a large number of simulated risk factors the sum of basis functions can be evaluated efficiently.\\ 

In order to introduce the algorithm, we start with the pricing of a Bermudan option. The value of a Bermudan option with payoff $g$ and exercise dates $t_{0},\ldots,t_{n}=T$ is given by the optimal stopping problem
\begin{align*}
V_{t_{0}}(x)=\sup_{t_{0}\leq t_{u}\leq T}B(t_{0})\EE^{\mathbb{Q}}\Big[\frac{g(X_{t_{u}})}{B(t_{u})}\Big\vert X_{t_{0}}=x\Big]
\end{align*}
where $B(t)$ is the bank account given by \eqref{eq:bank_acc}. The principle of Dynamic Programming yields the backward induction
\begin{align*}
&V_{T}(x)=g(x)\\
&V_{t_{u}}(x)=\max\left\{g(x),\EE^{\mathbb{Q}}\left[D(t_{u},t_{u+1})V_{t_{u+1}}(X_{t_{u+1}})\vert X_{t_{u}}=x\right]\right\}
\end{align*}
for the discount factor $D(t_{u},t_{u+1})=B_{t_{u}}/B_{t_{u+1}}$. More generally, we obtain the dynamic programming problem \eqref{eq:dpp_f} for the value function $V_{t_{u}}(x)$ 
\begin{align*}
V_{t_{u}}(x)=f\left(g(t_{u},x),\EE^{\mathbb{Q}}\left[D(t_{u},t_{u+1})V_{t_{u+1}}(X_{t_{u+1}})\vert X_{t_{u}}=x\right]\right)
\end{align*}
for a Lipschitz continuous function $f:\RR\times\RR\rightarrow\RR$ and a function $g:[0,T]\times\RR\rightarrow\RR$ with $g(T,x)=g(x)$. This formulation includes also the pricing of European and barrier options, as stated in Section \ref{sec:credit_exposure}.\\

We will solve the backward induction on the finite domain $\mathcal{X}=[\ul{x},\ol{x}]$. Assume we have at $t_{u+1}$ an approximation $\widehat{V}_{t_{u+1}}$ with $V_{t_{u+1}}(x)\approx\widehat{V}_{t_{u+1}}(x)=\sum_{j}c_{j}(t_{u+1})p_{j}(x)$. We solve the problem on a set of nodal points $x_{k}$, $k=0,\ldots,N$ and use the function values at these points to calculate new coefficients $c_{j}$. In this case the backward induction becomes
\begin{align*}
V_{t_{u}}(x_{k})&=f\left(g(t_{u},x_{k}),\EE^{\mathbb{Q}}\left[D(t_{u},t_{u+1})V_{t_{u+1}}(X_{t_{u+1}})\vert X_{t_{u}}=x_{k}\right]\right)\\
&\approx f\Big(g(t_{u},x_{k}),\EE^{\mathbb{Q}}\Big[D(t_{u},t_{u+1})\sum_{j=0}^{N}c_{j}(t_{u+1})p_{j}(X_{t_{u+1}})\vert X_{t_{u}}=x_{k}\Big]\Big)\\
&=f\Big(g(t_{u},x_{k}),\sum_{j=0}^{N}c_{j}(t_{u+1})\EE^{\mathbb{Q}}\left[D(t_{u},t_{u+1})p_{j}(X_{t_{u+1}})\vert X_{t_{u}}=x_{k}\right]\Big),
\end{align*}
where we exploited the linearity of the conditional expectation. Here, we see that the coefficients $c_{j}$ carry the information of the payoff, and the conditional expectations $\EE^{\mathbb{Q}}\left[D(t_{u},t_{u+1})p_{j}(X_{t_{u+1}})\vert X_{t_{u}}=x_{k}\right]$ carry the information of the stochastic process. Since the conditional expectations are independent of the backward induction they can be pre-computed in an offline step before the actual pricing. In the section section, we will propose a suitable set of nodal points and basis function and explain how to obtain the coefficients $c_{j}$ in every time step.\\

The presented procedure is a pricing method for a large class of option pricing problems which can be written in the form of \eqref{eq:dpp_f}. This includes different option types, payoff profiles as well as different asset classes and models.\\

Now, we are in a position to efficiently evaluate the exposure in formula \eqref{eq:EE_calc_Cheb_approx}. Assume we have simulated $M$ paths of the underlying risk factor. Then we price the option along the paths using the closed form approximation
\begin{align*}
V_{t_{u}}(X_{t_{u}}^{i})\approx\sum_{j=0}^{N}c_{j}(t_{u})p_{j}(X_{t_{u}}^{i}) \quad \text{for} \quad i=1,\ldots,M \ \text{ and } \ u=0,\ldots,n.
\end{align*}
These values can now be used to calculate the expected exposure or the potential future exposure for a given level $\alpha$. In the case of a Bermudan option one has to take into account that by exercising the option at $t_{u}$ the exposure becomes zero. Similarly, if the barrier option is knocked out the exposure at all future time steps is zero. These two effects yield a decreasing exposure for both types of options.\\ 

\textbf{Discounting:}\\
If he interest rate is our risk factor, i.e. $r(t)=r(t,X_{t})$, we simplify the expectation of the discounted basis function in the following way. Assume that the time stepping $\Delta t=t_{u+1}-t_{u}$ is small, then we can write
\begin{align*}
\EE^{\mathbb{Q}}&\left[D(t_{u},t_{u+1})p_{j}(X_{t_{u+1}})\vert X_{t_{u}}=x_{k}\right]
\\&=\EE^{\mathbb{Q}}\left[\exp\Big(-\int_{t_{u}}^{t_{u+1}}r(s,X_{s})\text{d}s\Big)p_{j}(X_{t_{u+1}})\vert X_{t_{u}}=x_{k}\right]\\
&\approx\EE^{\mathbb{Q}}\left[\exp\big(-\Delta t r(t_{u},X_{t_{u}})\big)p_{j}(X_{t_{u+1}})\vert X_{t_{u}}=x_{k}\right]\\
&=\exp\big(-\Delta t r(t_{u},x_{k})\big)\EE^{\mathbb{Q}}\left[p_{j}(X_{t_{u+1}})\vert X_{t_{u}}=x_{k}\right],
\end{align*}
where we assume that the discount factor is constant on a small interval. Otherwise, if the discount factor is deterministic we can simply write
\begin{align*}
\EE^{\mathbb{Q}}\left[D(t_{u},t_{u+1})p_{j}(X_{t_{u+1}})\vert X_{t_{u}}=x_{k}\right]
=D(t_{u},t_{u+1})\EE^{\mathbb{Q}}\left[p_{j}(X_{t_{u+1}})\vert X_{t_{u}}=x_{k}\right].
\end{align*}
In both cases, we only need to pre-compute the expectations $\EE^{\mathbb{Q}}\left[p_{j}(X_{t_{u+1}})\vert X_{t_{u}}=x_{k}\right]$.

\subsection{Choice of basis function and grid points}
Crucial for an efficient algorithm is the choice of an appropriate approximation method, i.e. the choice of basis functions $p_{j}$ and nodal points $x_{k}$. The chosen approximation method should be able to satisfy different requirements. The approximation error of the method should converge uniformly for a large class of (value) functions. A smooth value function should yield a fast error decay and good approximation results for a relatively low number of nodal points. The method should provide an efficient way to compute the coefficients $c_{j}$, ideally using an explicit formula. For the exposure calculation, the evaluation of the sum $\sum_{j}c_{j}p_{j}$ needs to be done in a fast and numerically stable way even for large sets of input values. A suitable choice for this task is Chebyshev polynomial interpolation as proposed by \cite{GlauMahlstedtPoetz2019}.

\subsubsection{Chebyshev polynomial interpolation}
The one-dimensional Chebyshev interpolation is a polynomial interpolation of a function $f$ in the interval $[-1,1]$ of degree $N$ in the $N+1$ Chebyshev points $z_{k}=\cos(\pi k/N)$. These points are not equidistantly distributed but cluster at $-1$ and $1$. The interpolant can be written as a sum Chebyshev polynomials $T_{j}(z)=\cos(j\,\text{acos}(z))$ with an explicit formula for the coefficients, i.e. for a function $f:[-1,1]\rightarrow\RR$ we obtain
\begin{align*}
I_{N}(f)(z)=\sum_{j=0}^{N}c_{j}T_{j}(z) \quad \text{with} \quad c_{j}=\frac{2^{\1_{\{0<j<N\}}}}{N}\sum_{k=0}^{N}{}^{''}f(z_{k})T_{j}(z_{k})
\end{align*}
where $\sum{}^{''}$ indicates the summand is multiplied by $1/2$ if $k=0$ or $k=N$. In order to evaluate the interpolation efficiently one can exploit the following alternative definition of the Chebyshev polynomials
\begin{align}\label{Chebpoly_recur}
T_{n+1}(z)=2zT_{n}(z)-T_{n-1}(z), \qquad T_{1}(z)=z \quad \text{and} \quad T_{0}(z)=1.
\end{align}
Based on this recurrence relation Clenshaw's algorithm provides an efficient framework to evaluate the Chebyshev interpolant $I_{N}(f)$
\begin{align*}
&b_{k}(x)=c_{k}+ 2xb_{k+1}(x)-b_{k+2}(x),\quad\text{for}\quad k=n,\ldots,1\\
&I_{N}(f)(x)=c_{0}+xb_{1}(x)-b_{2}(x)
\end{align*}
with starting values $b_{n+1}(x)=b_{n+2}(x)=0$.

In order to interpolate functions on an arbitrary rectangular $\mathcal{X}=[\underline{x},\overline{x}]$, we introduce a transformation $\tau_{\mathcal{X}}:[-1,1]\rightarrow\mathcal{X}$ defined by
\begin{align}
\tau_{\mathcal{X}}(z)=\overline{x}+0.5(\underline{x}-\overline{x})(1-z).\label{Transformation}
\end{align}
The Chebyshev interpolation of a function $f:\mathcal{X}\rightarrow\mathbb{R}$ can be written as
\begin{align}\label{Cheby_Interpolation}
I_{\overline{N}}(f)(x)=\sum_{j=0}^{N}c_{j}p_{j}(x) \quad \text{with}\quad c_j&=\frac{2^{\1_{\{0<j<N\}}}}{N_i}\sum_{k=0}^{N}{}^{''}f(x_k)T_{j}(z_k)
\end{align}
for $x\in\mathcal{X}$ with transformed Chebyshev polynomials $p_j(x)=T_j(\tau^{-1}_{\mathcal{X}}(x))1_{\mathcal{X}}(x)$ and transformed Chebyshev points $x_k=\tau_{\mathcal{X}}(z_k)$.
The one-dimensional interpolation has a tensor based extension to the multivariate case, see e.g. \cite{SauterSchwab2010}.\\

The Chebyshev interpolation provides promising convergence results and explicit error bounds. The interpolation converges for all Lipschitz continuous functions and for analytic functions the interpolation converges exponentially fast. See \cite{Trefethen2013} for the one-dimensional case and for a multivariate version \cite{SauterSchwab2010}. Moreover, the convergence is of polynomial order for differentiable functions and the derivatives converge as well, see \cite{GassGlauMahlstedtMair2018}. 

The Chebyshev interpolation is implemented in the open-source $\textit{MATLAB}$ package $\textit{chebfun}$ available at $\textit{www.chebfun.org}$.

\subsection{The Dynamic Chebyshev algorithm for exposure calculation}
Using Chebyshev interpolation as approximation technique, the time step in the backward induction looks as follow. Assume we have the nodal values $V_{t_{u}}(x_{k})$ at the Chebyshev points $x_{k}$, $k=0,\ldots,N$. Then, the explicit formula for the coefficients of the Chebyshev interpolation yields
\begin{align*}
c_{j}(t_{u})=\frac{2^{1_{0<j<N}}}{N}\sum_{k=0}^{N}{}^{''}V_{t_{u}}(x_{k})T_{j}(z_{k}),
\end{align*}
and we obtain a closed form approximation of the option price
\begin{align*}
V_{t_{u}}(x)\approx\widehat{V}_{t_{u}}(x)=\sum_{j=0}^{N}c_{j}(t_{u})p_{j}(x).
\end{align*}
Note that we presented the framework for an option on one underlying. In case of multiple underlyings we only need to replace the one-dimensional Chebyshev interpolation with its multivariate extension. The more general multivariate version of the algorithm is presented in \cite{GlauMahlstedtPoetz2019}.\\

The resulting pricing algorithm is for all three option types (Bermudan, barrier, European) essentially the same. However, the efficiency of the method is directly related to the smoothness of the value function. As a result the number of nodal points required for a given accuracy varies, compare Section 5.2 and 5.3 in \cite{GlauMahlstedtPoetz2019}. Moreover, the size of the interpolation domain influences the number of nodal points that a required for a target accuracy.\\

The resulting algorithm for exposure calculation under the pricing measure and under the real-world measure with the dynamic Chebyshev method is presented in the following two sections.

\subsubsection{Exposure calculation for pricing}\label{seq:exposure_pricing}
Here, we consider the computation of the exposure under the pricing measure $\mathbb{Q}$. The main application is the computation of the expected exposure $EE^{price}_{0}$ as an ingredient of the CVA calculation.\\

\textbf{Algorithm: Exposure of Bermudan options under $\mathbb{Q}$}\\
This algorithm provides a framework to calculate the expected exposure and the potential future exposure for a Bermudan option. A European option can be seen as a special case and falls also in the scope of this algorithm.\\

\textit{1. Simulation of risk factors}:\\
Simulate $M$ paths of the underlying risk factor $X_{t_{0}}^{i},\ldots,X_{t_{n}}^{i}$, $i=1,\ldots,M$ under the pricing measure $\mathbb{Q}$.\\

\textit{2. Preparation of the pricing algorithm}:\\
Find a suitable interpolation domain $\mathcal{X}=[\ul{x},\ol{x}]$ and calculate the nodal points $x_{k}=\tau_{\mathcal{X}}(\cos(k\pi/N))$, $k=0,\ldots,N$ for this domain. Pre-compute the conditional expectations of the basis function under the pricing measure $\mathbb{Q}$
\[\Gamma_{k,j}=\EE^{\mathbb{Q}}\left[p_{j}(X_{\Delta t})\vert X_{0}=x_{k}\right].\]

\textit{3. Initialization of the pricing algorithm}:\\
Start pricing at maturity $T$ and compute nodal values $\widehat{V}_{T}(x_{k})=g(T, x_{k})$ for all $k=0,\ldots,N$ for the payoff function $g(T,x_{k})$. Calculate Chebyshev coefficients $c_{j}(T)$ using the nodal values $\widehat{V}_{T}(x_{k})$. For all paths compute the exposure $E_{T}^{i}=\max\{g(T, X_{T}^{i}),0\}$.\\

\textit{4. Exposure calculation via backward induction}:\\
Iterative time stepping $t_{u+1}\rightarrow t_{u}$: Assume we have a Chebyshev approximation $V_{t_{u+1}}(x)\approx\widehat{V}_{t_{u+1}}(x)=\sum_{j}c_{j}(t_{u+1})p_{j}(x)$
\begin{itemize}
\item compute nodal values 
\[\widehat{V}_{t_{u}}(x_{k})=
\begin{cases}
\max\{g(x_k),D_{u}(x_{k})\sum_{j=0}^{N}c_{j}(t_{u+1})\Gamma_{k,j}\},\quad&\text{if }t_{u}\text{ is exercise day},\\
D_{u}(x_{k})\sum_{j=0}^{N}c_{j}(t_{u+1})\Gamma_{k,j},\quad&\text{otherwise}
\end{cases}
\]
with discount factor $D_{u}(x_k)=D(t_{u},t_{u+1},x_{k})$
\item calculate new coefficients $c_{j}(t_u)$ using nodal values $\widehat{V}_{t_{u}}(x_{k})$,
\item price the option for all simulation paths $V_{t_{u}}^{i}=\widehat{V}_{t_u}(X_{t_{u}}^{i})=\sum_{j\in J}c_{j}(t_{u})p_{j}(X_{t_{u}}^{i})$,
\item calculate exposure $E_{t_{u}}^{i}=D(0,t_{u})\max\{V_{t_{u}}^{i},0\}$,
\item if the option is exercised (i.e $V_{t_{u}}^{i}=g(X_{t_{u}}^{i})$), update the exposure at all future time steps on this path $E_{t_{j}}^{i}$, $j=u+1,\ldots,n$.
\end{itemize}

\textit{5. Calculation of expected exposure}:\\
Obtain an approximation of the expected future exposures
\[EE^{price}_{0}(t_{u})=\EE^{\mathbb{Q}}\left[D(t_{u})\max\{V_{t_u},0\}\right]\approx\frac{1}{M}\sum_{i=1}^{M}D(t_{u})E_{t_{u}}^{i},\]
and an approximation of the potential future exposures
\[PFE_{0}^{price}(t_{u})=\inf\big\{y:\mathbb{Q}\big(E_{t}(x)\leq y\big)\geq \alpha\big\}\approx\inf\big\{y:\frac{\#\{E^{i}_{t_{u}}\leq y\}}{M}\geq \alpha\big\}.\]
for all $u=0,\ldots,n$.\\[2ex]

\textbf{Modification of the algorithm for barrier options}\\
The presented algorithm Bermudan option can be modified to calculate the exposure of barrier options. In this case the interpolation domain is chosen depending on the barrier. There is no early exercise, however, we need to take care of the knock-out feature. For an up-and-out option with barrier $B$ and $b=\log(B)$, the following modifications are added to the algorithm. First, the interpolation domain is set as $\mathcal{X}=[\ul{x},b]$. Second, the iterative time stepping from $t_{u+1}\rightarrow t_{u}$ is modified in the following way. Assume we have a Chebyshev approximation $V_{t_{u+1}}(x)\approx\widehat{V}_{t_{u+1}}(x)=\sum_{j}c_{j}(t_{u+1})p_{j}(x)$,
\begin{itemize}
\item compute nodal values $\widehat{V}_{t_{u}}(x_{k})=D_{u}(x_{k})\sum_{j=0}^{N}c_{j}\Gamma_{k,j}$ and new coefficients $c_{j}(t_u)$,
\item price the option for all simulation paths $V_{t_{u}}^{i}=\widehat{V}_{t_u}(X_{t_{u}}^{i})=\sum_{j\in J}c_{j}(t_{u})p_{j}(X_{t_{u}}^{i})$ if $X_{t_{u}}^{i}\leq b$ and $V_{t_{u}}^{i}=0$ otherwise,
\item calculate the exposure $E_{t_{u}}^{i}=D(0,t_{u})\max\{V_{t_{u}}^{i},0\}$,
\item if the option is knocked-out, i.e if $X_{t_{u}}^{i}> b$ update the exposure at all future time steps on this path $E_{t_{j}}^{i}$, $j=u+1,\ldots,n$.
\end{itemize}

\subsubsection{Exposure calculation for risk management}\label{seq:exposure_risk}
In this section we present an algorithm for the exposure calculation under the real-world measure $\mathbb{P}$.\\

\textbf{Algorithm: Exposure of Bermudan options under $\mathbb{P}$}\\
This algorithm provides a framework to calculate the expected exposure and the potential future exposure for a Bermudan option.
\textit{1. Simulation of risk factors}:\\
Simulate $M$ paths of the underlying risk factor $X_{t_{0}}^{i},\ldots,X_{t_{n}}^{i}$, $i=1,\ldots,M$ under the real-world measure $\mathbb{P}$.\\

\textit{2. Preparation of the pricing algorithm}:\\
Find a suitable interpolation domain $\mathcal{X}=[\ul{x},\ol{x}]$ and calculate the nodal points $x_{k}=\tau_{\mathcal{X}}(\cos(k\pi/N))$, $k=0,\ldots,N$ for this domain. Pre-compute the conditional expectations of the basis function under the pricing measure $\mathbb{Q}$
\[\Gamma_{k,j}=\EE^{\mathbb{Q}}\left[p_{j}(X_{\Delta t})\vert X_{0}=x_{k}\right].\]

\textit{3. Initialization of the pricing algorithm}:\\
Start pricing at maturity $T$ and compute nodal values $\widehat{V}_{T}(x_{k})=g(T, x_{k})$ for all $k=0,\ldots,N$ for the payoff function $g(T,x_{k})$. Calculate Chebyshev coefficients $c_{j}(T)$ using the nodal values $\widehat{V}_{T}(x_{k})$. For all paths compute the exposure $E_{T}^{i}=\max\{g(T, X_{T}^{i}),0\}$.\\

\textit{4. Exposure calculation via backward induction}:\\
Iterative time stepping $t_{u+1}\rightarrow t_{u}$: Assume we have a Chebyshev approximation $V_{t_{u+1}}(x)\approx\widehat{V}_{t_{u+1}}(x)=\sum_{j}c_{j}(t_{u+1})p_{j}(x)$
\begin{itemize}
\item compute nodal values 
\[\widehat{V}_{t_{u}}(x_{k})=
\begin{cases}
\max\{g(x_k),D_{u}(x_{k})\sum_{j=0}^{N}c_{j}(t_{u+1})\Gamma_{k,j}\},\quad&\text{if }t_{u}\text{ is exercise day},\\
D_{u}(x_{k})\sum_{j=0}^{N}c_{j}(t_{u+1})\Gamma_{k,j},\quad&\text{otherwise}
\end{cases}
\]
with discount factor $D_{u}(x_k)=D(t_{u},t_{u+1},x_{k})$
\item calculate new coefficients $c_{j}(t_u)$ using nodal values $\widehat{V}_{t_{u}}(x_{k})$,
\item price the option for all simulation paths $V_{t_{u}}^{i}=\widehat{V}_{t_u}(X_{t_{u}}^{i})=\sum_{j\in J}c_{j}(t_{u})p_{j}(X_{t_{u}}^{i})$,
\item calculate exposure $E_{t_{u}}^{i}=\max\{V_{t_{u}}^{i},0\}$,
\item if the option is exercised (i.e $V_{t_{u}}^{i}=g(X_{t_{u}}^{i})$), update the exposure at all future time steps on this path $E_{t_{j}}^{i}$, $j=u+1,\ldots,n$.
\end{itemize}

\textit{5. Calculation of expected exposure}:\\
Obtain an approximation of the expected future exposures
\[EE^{risk}_{0}(t_{u})=\EE^{\mathbb{P}}\left[D(t_{u})\max\{V_{t_u},0\}\right]\approx\frac{1}{M}\sum_{i=1}^{M}D(t_{u})E_{t_{u}}^{i},\]
and an approximation of the potential future exposures
\[PFE_{0}^{risk}(t_{u})=\inf\big\{y:\mathbb{P}\big(E_{t}(x)\leq y\big)\geq \alpha\big\}\approx\inf\big\{y:\frac{\#\{E^{i}_{t_{u}}\leq y\}}{M}\geq \alpha\big\}.\]
for all $u=0,\ldots,n$.\\[2ex]
Similarly to the exposure calculation for pricing we can modify the algorithm or barrier options.\\

A comparison with the algorithms in the previous section shows that the exposure calculation under $\mathbb{Q}$ and $\mathbb{P}$ has the same structure. The difference is that the paths of the risk factor(s) are simulated under a different measure and the calculation of $EE_{0}^{price}(t)$ requires the discount factor at time point $t$. Moreover, if we are interested in the PFE we need a higher number of simulation paths since the PFE is a tail measure.

\subsection{Conceptional benefits of the method}
The presented algorithms provide efficient solutions for the exposure calculation. Moreover, the structure of the new approach comes with conceptual benefits, which can be exploited in practice.\\

\textbf{Error analysis}\\
Let $\varepsilon_{t}:=\Vert V_{t}-\widehat{V}_{t}\Vert_{\infty}=\max_{x\in\mathcal{X}}\vert V_{t}(x)-\widehat{V}_{t}(x)\vert$ be the error of the dynamic Chebyshev method and assume that the truncation error for this domain is negligible. From \cite{GlauMahlstedtPoetz2019} we obtain the following result for the convergence of the dynamic Chebyshev method. If the value function $x\mapsto V_{t}(x)$ is analytic, the log-error decays nearly linearly in the number of nodal points $N$, i.e.
\begin{align}\label{eq:log_error_analytic}
\log(\varepsilon_{t})\leq -c_{1}N + c_{2}\log(\log(N)) + c_{3}
\end{align}
for constants $c_{1},c_{2},c_{3}>0$. If the value function is $p$-times continuously differentiable the 
log-error decays nearly linearly in the logarithm of the number of nodal points $N$, i.e.
\begin{align}\label{eq:log_error_diff}
\log(\varepsilon_{t})\leq -p\log(N) + c_{2}\log(\log(N)) + c_{3}.
\end{align}
The analyticity of the value function holds for European and barrier options, whereas, the value function of a Bermudan option is only continuously differentiable.

In practice, a convenient approach to assess convergence a posteriori is to investigate the decline of the absolute values of the estimated coefficients $\vert c_{j}\vert$, see for instance the implementation of the \textit{chebfun} package on \textit{www.chebfun.org}.\\

The error analysis for the pricing can be directly applied to the exposure calculation. Assume the error $\Vert V_{t}-\widehat{V}_{t}\Vert_{\infty}<\varepsilon$ for some $\varepsilon>0$. For the expected exposure of an option $EE_{t}=\EE[V_{t}(X_{t})\vert X_{t}=x_{0}]$ holds
\begin{align*}
&\Big\vert \EE[V_{t}(X_{t})\vert X_{t}=x_{0}] - \frac{1}{M}\sum_{i=0}^{M}\widehat{V}_{t}(X_{t}^{i})\Big\vert\\
&\quad\leq \Big\vert \EE[V_{t}(X_{t})\vert X_{t}=x_{0}] - \frac{1}{M}\sum_{i=0}^{M}V_{t}(X_{t}^{i})\Big\vert + \Big\vert\frac{1}{M}\sum_{i=0}^{M}\big(V_{t}(X_{t}^{i}) -\widehat{V}_{t}(X_{t}^{i})\big)\Big\vert\\
&\quad\leq \Big\vert \EE[V_{t}(X_{t})\vert X_{t}=x_{0}] - \frac{1}{M}\sum_{i=0}^{M}V_{t}(X_{t}^{i})\Big\vert +\frac{1}{M}\sum_{i=0}^{M} \big\vert V_{t}(X_{t}^{i}) -\widehat{V}_{t}(X_{t}^{i})\big\vert.
\end{align*}
The first term is the Monte Carlo error of the scenario simulation which decays with $M^{-1/2}$. This error is the same for the full re-evaluation approach and our dynamic Chebyshev approach. The second term is the actual pricing error between a full re-evaluation and our method and is bounded by $\varepsilon$. This derivation holds under both measures $\mathbb{Q}$ and $\mathbb{P}$ and hence for $EE^{price}_{t}$ and $EE^{risk}_{t}$.

In the same way the error for the potential future exposure splits into the scenario simulation error and the pricing error. Here, the PFE using a full re-evaluation and the dynamic Chebyshev method are given by 
\begin{align*}
y^{\star}:=\inf\big\{y:\frac{\#\{V_{t}(X_{t}^{i})\leq y\}}{M}\geq \alpha\big\}, \quad\text{and}\quad \widehat{y}^{\star}:=\inf\big\{y:\frac{\#\{\widehat{V}_{t}(X_{t}^{i})\leq y\}}{M}\geq \alpha\big\}.
\end{align*}
We assume $M\alpha$ is an integer and obtain the equality
\begin{align*}
\#\{V_{t}(X_{t}^{i})\leq y^{\star}\}=\#\{\widehat{V}_{t}(X_{t}^{i})\leq \widehat{y}^{\star}\}=\#\{V_{t}(X_{t}^{i})\leq \widehat{y}^{\star}+\Delta V^{i}\}\leq \#\{V_{t}(X_{t}^{i})\leq \widehat{y}^{\star}+\varepsilon\}
\end{align*}
with $\Delta V^{i} = V_{t}(X_{t}^{i})-\widehat{V}_{t}(X_{t}^{i})$. It follows that $y^{\star}\leq\widehat{y}^{\star}+\varepsilon$ and similarly, exchanging the roles of $V_{t}$ and $\widehat{V}_{t}$ yields $\widehat{y}^{\star}\leq y^{\star}+\varepsilon$. Hence we obtain the difference of the estimated potential future exposure $\vert y^{\star}-\widehat{y}\vert<\varepsilon$.
\begin{remark}
The difference between the expected exposure computed via Chebyshev approximation and via full re-evaluation is bounded by the pricing error $\Vert V_{t}-\widehat{V}_{t}\Vert_{\infty}$. The same holds for the potential future exposure. The pricing error $\Vert V_{t}-\widehat{V}_{t}\Vert_{\infty}$ decays nearly exponentially in the number of points for European and barrier options and nearly algebraically for Bermudan options.\\
\end{remark}

\textbf{Closed form expression for the conditional expectations}\\
The conditional expectations of the Chebyshev polynomials $\EE^{\mathbb{Q}}\left[p_{j}(X_{t_{u+1}})\vert X_{t_{u}}=x_{k}\right]$ depend only on the underlying process and can be pre-computed prior to the time-stepping. Here two different cases have to be distinguished.\\
If the underlying process $X_{t_{u+1}}\vert X_{t_{u}}=x$ is normally distributed the conditional expectations of the Chebyshev polynomials can be calculated analytically. Examples are the Black-Scholes model (with log-stock price $X_t$), the Vasicek model or the one factor Hull-White model (both with interest rate $X_t$). More generaly, assume for instance the underlying process is modelled via an SDE of the form
\begin{align*}
\text{d}X_{t}=\alpha(t,X_{t})\text{d}t + \beta(t,X_{t})\text{d}W_{t}
\end{align*}
for a standard Brownian motion $W_{t}$ with Euler–Maruyama approximation
\begin{align*}
X_{t_{u+1}}\approx x + \alpha(t_u,x)(t_{u+1}-t_{u}) + \beta(t_{u},x)\sqrt{t_{u+1}-t_{u}}Z =:\widehat{X}^{x}_{t_{u+1}} \qquad Z\sim\mathcal{N}(0,1)
\end{align*}
and the right hand side is thus normally distributed. The following proposition provides an analytic formula for the conditional moments $\EE^{\mathbb{Q}}[p_{j}(\widehat{X}^{x_{k}}_{t_{u+1}})]$.
\begin{proposition}\label{prop:moments_BS_model}
Assume that $X_{t}$ is a stochastic process with $X_{t_{u+1}}\vert X_{t_{u}}=x_{k}\sim\mathcal{N}(x_{k}+\Delta t\:\!\:\!\mu,\Delta t\sigma^{2})$ with $\Delta t=t_{u+1}-t_{u}$. Then the conditional moments can be written as
\begin{align*}
&\EE[p_j(X_{t_{u+1}})\vert X_{t_u}=x]=\EE[T_{j}(Y)\1_{[-1,1]}(Y)]\\
&Y\sim\mathcal{N}\Big(1-2\frac{\ol{x}-x}{\ol{x}-\ul{x}}+\frac{2}{\ol{x}-\ul{x}}\Delta t\:\!\:\!\mu, \big(\frac{2}{\ol{x}-\ul{x}}\big)^{2}\Delta t\sigma^{2}\Big).
\end{align*}
\end{proposition}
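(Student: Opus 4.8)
The plan is to recognise the claimed identity as nothing more than the pushforward of the conditional normal law $X_{t_{u+1}}\vert X_{t_u}=x\sim\mathcal{N}(x+\Delta t\,\mu,\Delta t\sigma^2)$ under the affine change of variables $\tau^{-1}_{\mathcal{X}}$ that defines the transformed Chebyshev basis. Recall from \eqref{Cheby_Interpolation} that $p_j(y)=T_j(\tau^{-1}_{\mathcal{X}}(y))\1_{\mathcal{X}}(y)$, so the whole task reduces to tracking what $\tau^{-1}_{\mathcal{X}}$ does to the mean, the variance, and the domain of integration of the underlying Gaussian.

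First I would invert the transformation $\tau_{\mathcal{X}}$ from \eqref{Transformation}. Solving $y=\ol{x}+0.5(\ul{x}-\ol{x})(1-z)$ for $z$ gives the explicit affine map
\begin{align*}
\tau^{-1}_{\mathcal{X}}(y)=1-2\frac{\ol{x}-y}{\ol{x}-\ul{x}},
\end{align*}
which is strictly increasing and sends $[\ul{x},\ol{x}]$ bijectively onto $[-1,1]$, as one checks at the endpoints via $\tau^{-1}_{\mathcal{X}}(\ul{x})=-1$ and $\tau^{-1}_{\mathcal{X}}(\ol{x})=1$. The structurally important feature is that $\tau^{-1}_{\mathcal{X}}$ is affine with slope $2/(\ol{x}-\ul{x})$.

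Next I would set $Y:=\tau^{-1}_{\mathcal{X}}(X_{t_{u+1}})$. Being an affine image of a Gaussian random variable, $Y$ is again Gaussian, and its parameters follow from the elementary rules $\EE[aZ+b]=a\EE[Z]+b$ and $\Var[aZ+b]=a^2\Var[Z]$ applied with $a=2/(\ol{x}-\ul{x})$. Substituting $\EE[X_{t_{u+1}}]=x+\Delta t\,\mu$ and $\Var[X_{t_{u+1}}]=\Delta t\sigma^2$ reproduces exactly the mean $1-2\frac{\ol{x}-x}{\ol{x}-\ul{x}}+\frac{2}{\ol{x}-\ul{x}}\Delta t\,\mu$ and the variance $\big(\tfrac{2}{\ol{x}-\ul{x}}\big)^2\Delta t\sigma^2$ recorded in the statement. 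Finally, because $\tau^{-1}_{\mathcal{X}}$ is a strictly increasing bijection between $\mathcal{X}$ and $[-1,1]$, the indicator transforms as $\1_{\mathcal{X}}(X_{t_{u+1}})=\1_{[-1,1]}(Y)$; combining this with $T_j(\tau^{-1}_{\mathcal{X}}(X_{t_{u+1}}))=T_j(Y)$ turns $\EE[p_j(X_{t_{u+1}})\vert X_{t_u}=x]$ into $\EE[T_j(Y)\1_{[-1,1]}(Y)]$, which is the assertion.

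There is no genuine analytic difficulty here; the only point requiring care is the bookkeeping of signs when inverting $\tau_{\mathcal{X}}$, since that map is written in the slightly unusual form $\ol{x}+0.5(\ul{x}-\ol{x})(1-z)$ rather than the more familiar $\tfrac12\big((\ol{x}-\ul{x})z+\ol{x}+\ul{x}\big)$, and one must confirm the map is orientation-preserving so that the truncation interval $[-1,1]$ appears with the correct endpoints. The remaining object $\EE[T_j(Y)\1_{[-1,1]}(Y)]$ is left in a form ready for closed-form evaluation: it amounts to integrals of $T_j$ against a truncated Gaussian density, which can be computed via the recurrence \eqref{Chebpoly_recur} together with standard Gaussian moment formulae, but that evaluation lies outside what the proposition claims.
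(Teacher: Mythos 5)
Your proposal is correct and follows essentially the same route as the paper: both proofs reduce the claim to the observation that $Y=\tau^{-1}_{\mathcal{X}}(X_{t_{u+1}})$ is an affine image of the conditional Gaussian, with the indicator $\1_{\mathcal{X}}$ mapping to $\1_{[-1,1]}$ under the increasing bijection. The only cosmetic difference is that the paper first rewrites $X_{t_{u+1}}=x+X_{\Delta t}$ via stationary increments before applying $\tau^{-1}_{\mathcal{X}}$, whereas you work directly from the stated conditional law, which is if anything slightly more faithful to the hypothesis.
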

\begin{proof}
From the properties of a Brownian motion with drift follows
\begin{align*}
\EE[p_j(X_{t_{u+1}})\vert X_{t_u}=x]=\EE[p_j(x+(X_{t_{u+1}}-X_{t_{u}}))]=\EE[p_j(x+X_{\Delta t})].
\end{align*}
The definition of $p_{j}$ and the inverse of the linear transformation $\tau_{[\ul{x},\ol{x}]}$ yield
\begin{align*}
\EE[p_j(x+X_{\Delta t})]&=\EE[T_j(\tau^{-1}_{[\ul{x},\ol{x}]}(x+X_{\Delta t})\1_{[\ul{x},\ol{x}]}(x+X_{\Delta t})]\\
&=\EE[T_{j}(1-2\frac{\ol{x}-(x+X_{\Delta t})}{\ol{x}-\ul{x}})\1_{[\ul{x},\ol{x}]}(x+X_{\Delta t})]\\
&=\EE[T_{j}(1-2\frac{\ol{x}-x}{\ol{x}-\ul{x}}+\frac{2}{\ol{x}-\ul{x}}X_{\Delta t})\1_{[\ul{x},\ol{x}]}(x+X_{\Delta t})]\\
&=\EE[T_{j}(Y)\1_{[-1,1]}(Y)]
\end{align*}
with $Y$ defined as
\begin{align*}
Y=1-2\frac{\ol{x}-x}{\ol{x}-\ul{x}}+\frac{2}{\ol{x}-\ul{x}}X_{\Delta t}
\end{align*}
and we used that for a linear transformation holds
\begin{align*}
\1_{[\ul{x},\ol{x}]}(x+X_{\Delta t})]=\1_{[\tau^{-1}_{[\ul{x},\ol{x}]}(\ul{x}),\tau^{-1}_{[\ul{x},\ol{x}]}(\ol{x})]}(\tau^{-1}_{[\ul{x},\ol{x}]}(x+X_{\Delta t}))=\1_{[-1,1]}(Y).
\end{align*}
The properties of a normally distributed variable yields our claim.
\end{proof}

\begin{proposition}
Let $Y\sim\mathcal{N}(\mu,\sigma^{2})$ be a normally distributed random variable with density $f$ and distribution function $F$. The truncated generalized moments $\mu_{j}=\EE[T_{j}(Y)\1_{[-1,1]}(Y)]$ are recursively defined by
\begin{align*}
\mu_{n+1}=2\mu\mu_{n} - 2\sigma^{2}\big(f(1)-f(-1)T_{n}(-1)-2n\sum_{j=0}^{n-1}{}^{'}\mu_{j}\1_{(n+j)\bmod 2=1}\big)-\mu_{n-1}
\end{align*}
for $n\geq 1$ and starting values $\mu_{0}=F(1)-F(-1)$, $\mu_{1}=\mu\mu_{0}-\sigma^{2}(f(1)-f(-1)$ and where $\sum{}^{'}$ indicates that the first term is multiplied with $1/2$.
\end{proposition}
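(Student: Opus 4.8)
The plan is to build the recursion directly from the three-term recurrence \eqref{Chebpoly_recur} for the Chebyshev polynomials, combined with two analytic facts about the Gaussian density: the first-order differential equation it satisfies, and a classical expansion of $T_n'$ in lower-order Chebyshev polynomials. First I would apply \eqref{Chebpoly_recur} inside the expectation, writing
\begin{align*}
\mu_{n+1} = \EE[T_{n+1}(Y)\1_{[-1,1]}(Y)] = 2\EE[Y\,T_n(Y)\1_{[-1,1]}(Y)] - \mu_{n-1},
\end{align*}
so that the entire problem reduces to evaluating the single quantity $\EE[Y\,T_n(Y)\1_{[-1,1]}(Y)]$.

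Second, I would use that the density $f$ of $Y\sim\mathcal{N}(\mu,\sigma^{2})$ satisfies $y f(y) = \mu f(y) - \sigma^{2} f'(y)$. Substituting this into the integral representation $\EE[Y\,T_n(Y)\1_{[-1,1]}(Y)]=\int_{-1}^{1}y\,T_n(y)f(y)\,\dy$ splits it into a term $\mu\mu_n$ and a term involving $\int_{-1}^{1} T_n(y)f'(y)\,\dy$. An integration by parts on the latter produces the boundary contribution $[T_n(y)f(y)]_{-1}^{1} = f(1) - T_n(-1)f(-1)$, using $T_n(1)=1$, together with an interior integral $-\int_{-1}^{1} T_n'(y)f(y)\,\dy$.

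Third, the interior integral is handled via the classical identity $T_n'(x) = 2n\sum_{j=0}^{n-1}{}' T_j(x)\1_{(n+j)\bmod 2 = 1}$, with the prime halving the $j=0$ term, which expresses the derivative of a Chebyshev polynomial as a sum of Chebyshev polynomials of strictly lower, parity-shifted degree. Interchanging summation and integration turns $\int_{-1}^{1} T_n'(y)f(y)\,\dy$ into $2n\sum_{j=0}^{n-1}{}'\mu_j\1_{(n+j)\bmod 2=1}$. Collecting these pieces and multiplying by $2$ and $-\sigma^{2}$ as dictated by the first two steps gives precisely the stated recursion. The starting values follow by direct evaluation: $\mu_0 = \EE[\1_{[-1,1]}(Y)] = F(1)-F(-1)$, and $\mu_1 = \int_{-1}^{1} y f(y)\,\dy = \mu\mu_0 - \sigma^{2}(f(1)-f(-1))$, again using the density identity with $n=0$.

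The integration by parts and the bookkeeping of the constants are routine; the one genuinely essential ingredient is the derivative expansion of $T_n'$, and the main thing to get right is its parity pattern, since an off-by-one error there would corrupt both the indicator $\1_{(n+j)\bmod 2=1}$ in the sum and the sign $T_n(-1)=(-1)^{n}$ appearing in the boundary term.
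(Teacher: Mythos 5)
Your proposal is correct and follows essentially the same route as the paper: reduce $\mu_{n+1}$ via the three-term recurrence to $\EE[Y\,T_n(Y)\1_{[-1,1]}(Y)]$, apply the Gaussian identity $yf(y)=\mu f(y)-\sigma^{2}f'(y)$, integrate by parts, and expand the resulting $\int_{-1}^{1}T_n'(y)f(y)\,\dy$ term using the parity-shifted Chebyshev expansion of $T_n'$. The only difference is one of presentation: you cite the expansion $T_n'=2n\sum_{j}{}'T_j\1_{(n+j)\bmod 2=1}$ as a classical fact, whereas the paper derives the equivalent statement for the truncated moments $\mu_j'$ by induction from the recurrence $T'_{n+1}=2(n+1)T_n+\tfrac{n+1}{n-1}T'_{n-1}$.
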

\begin{proof}
The proof can be found in the appendix.
\end{proof}
For a large model class for which the underlying process is conditionally normally distributed or can be approximated by such a process, the conditional moments can thus be efficiently computed by an analytic formula.\\

If the underlying process is not normally distributed numerical approximation techniques come into play. \cite{GlauMahlstedtPoetz2019} give an overview of different approaches which can be used to calculate the conditional expectations. For example numerical quadrature techniques using the density or characteristic function of the process or with the help of Monte Carlo simulations. The possibility to use different approaches gives us the flexibility to apply the method in a variety of models.\\

When we use an equidistant time stepping $t_{u+1}-t_{u}=\Delta t$ the problem can be further simplified. Assuming
\begin{align}\label{eq_stationarity_assumption}
\EE^{\mathbb{Q}}\left[p_{j}(X_{t_{u+1}})\vert X_{t_{u}}=x_{k}\right]
=\EE^{\mathbb{Q}}\left[p_{j}(X_{\Delta t})\vert X_{0}=x_{k}\right],
\end{align}
the pre-computation step becomes independent of the maturity $T$ and the number of time steps $n$. We only have to simulate the underlying at $\Delta t$. Equation \eqref{eq_stationarity_assumption} holds if the process $(X_{t})_{0\leq t\leq T}$ has stationary increments.\\ 

\textbf{Delta and Gamma as by-product of the method}\\
Generally, the efficiency of the method allows a fast computation of sensitivities via bump and re-run. For Delta and Gamma the polynomial structure of the Chebyshev approximation allows for a direct computation without re-running the time-stepping. Instead we only need to differentiate a polynomial. For Delta we obtain 
\begin{align*}
\frac{\partial V_{t}}{\partial x}(x)\approx\sum_{j=0}^{N}c_{j}^{t}\frac{\partial p_{j}}{\partial x}(x),
\end{align*} 
which is again a polynomial with degree $N-1$ and for Gamma we obtain
\begin{align*}
\frac{\partial^{2} V_{t}}{\partial x^{2}}(x)\approx\sum_{j=0}^{N}c_{j}^{t}\frac{\partial^{2} p_{j}}{\partial x^{2}}(x),
\end{align*} 
a polynomial of degree $N-2$. These formulas can be used to calculate the derivative of $V_{t}$ with respect to $x_{t}$. For the derivative w.r.t. $x_{0}$ we obtain via chain rule $\partial V_{t}/\partial x_{0}=\partial V_{t}/\partial x_{t}\cdot\partial x_{t}/\partial x_{0}$.\\

\textbf{Several options on one underlying:}\\
The structure of the dynamic Chebyshev algorithm for exposure calculation exhibits additional benefits for the complex derivative portfolios. For instance, consider non-directional strategies and structured products that offer different levels of capital protection or enhanced exposure. They are typically constructed from a combination of European options, with different strikes and maturities, together with Bermudan options and barrier options. Such structures are essentially a portfolio of derivatives on the same underlying asset, and in this case, the pricing and exposure calculation can be simplified by choosing the same interpolation domain. First, we only need to compute the conditional moments once and  then we can use them for all options. Second, we require less computation in the exposure calculation. Assume we have two options and we are in the time stepping of the Dynamic Chebyshev algorithm at step $t_{u}$. We have two Chebyshev approximations $\widehat{V}^{1}_{t_{u}}=\sum c^{1}_{j}(t_u)p_j$ and $\widehat{V}^{2}_{t_{u}}=\sum c^{2}_{j}(t_u)p_j$. For the exposure calculation we need to compute $\widehat{V}^{1/2}_{t_u}(X_{t_{u}}^{i})=\sum c^{1/2}_{j}(t_u)p_j(X_{t_{u}}^{i})$ for all risk factors $i=1,\ldots,M$. Hence the evaluation of the Chebyshev polynomials $p_{j}$ at the risk factors $X_{t_{u}}^{i}$ is the same and has only to be done once. In summary, with low additional effort, we can calculate the exposure of several options on one underlying. 

\subsection{Implementational aspects of the DC method for exposure calculation}
In this section, we discuss several implementational aspects which can help to achieve a high performance.\\

\textbf{Choice of interpolation domain:}\\
The choice of a suitable interpolation domain is an important step to ensure a high efficiency of the method. In general the choice of the domain is a trade-off between speed (small domain, low number of nodal points) and accuracy (larger domain, more nodal points). In general, we want to choose the interpolation domain in dependence of the underlying distribution. A suitable choice for the lower boundary is the $p$-quantile for a small $p$ (e.g. $10^{-4}$, $10^{-5}$) and similarly the $1-p$ quantile as the upper boundary. If the underlying risk factor is normally distributed with $X_{T}bv\sim\mathcal{N}(\mu_{T},\sigma_{T}^{2})$ we define the interval
\begin{align*}
[\ul{x},\ol{x}]=[\mu_{T} -k\cdot\sigma,\mu + k\cdot\sigma]
\end{align*}
for some $k>0$. For most applications $k=4$ or $k=5$ is sufficient. 

If $X_{t_{u}}^{i}\notin [\ul{x},\ol{x}]$ one can explore additional knowledge of the specific product. First we consider a Bermudan put option. Here we know that the value of the option converges towards zero if the (log-) price of the underlying goes to infinity. The upper bound $\ol{x}$ is therefore no problem and if we have a risk factor with $X^{i}_{t_{u}}>\ol{x}$ we can simply set $V_{t_{u}}(X^{i}_{t_{u}})=0$. For very low values of $x$ the option is always exercised and thus we set $V_{t_{u}}(X^{i}_{t_{u}})=g(X^{i}_{t_{u}})$ if $X^{i}_{t_{u}}<\ul{x}$.

For an European call or put option we can use the Call-Put parity $C_{t}(x)-P_{t}(x)=e^{x}-e^{-r(T-t)}K$ to find a suitable interpolation domain. The price of a call option converges towards zero for small $x$ and towards $e^{x}-e^{-r(T-t)}K$ for large $x$. We choose $\ul{x}$, $\ol{x}$ such that $C_{t}(\ul{x})$ and $P_{t}(\ol{x})$ are sufficiently small. Then we can set $V_{t_{u}}(X^{i}_{t_{u}})=0$ for $X^{i}_{t_{u}}<\ul{x}$ and $V_{t_{u}}(X^{i}_{t_{u}})=e^{X^{i}_{t_{u}}}-e^{-r(T-t)}K$ for $X^{i}_{t_{u}}>\ol{x}$.

As our last example we consider an up-and-out call option with barrier $b$. Here $b$ is the logical upper bound of the interpolation domain and for $\ul{x}$ we proceed similarly to the European call option case.\\

\textbf{Smoothing:}\\
If the payoff of the option has a kink or discontinuity the approximation with Chebyshev polynomials is not efficient. In this case we can modify the algorithm and improve convergence by a "smoothing" of the first time step. We can exploit that the continuation value at $t_{n-1}$ is exactly the value of a European option with duration $\Delta t=t_{n}-t_{n-1}$, i.e.
\begin{align}\label{eq:payoff_smoothing}
V_{t_{n-1}}(x)=\max\{g(x),P^{EU}(x)\} \quad \text{with} \quad P^{EU}(x)=\EE^{\mathbb{Q}}[g(X_{t_{n}})\vert X_{t_{n-1}}=x].
\end{align}
Often, it is more efficient to compute directly the European option price $\EE^{\mathbb{Q}}[g(X_{t_{n}})\vert X_{t_{n-1}}=x_{k}]$ at the nodal points $x_{k}$, $k=0,\ldots,N$. Hence, there is no interpolation error in the first step and we start with the interpolation of the (smooth) function $V_{t_{n-1}}$. We use this technique for all our numerical experiments. The influence of this modification on the error decay is investigated in \cite{GlauMahlstedtPoetz2019}.\\

\textbf{Splitting of the interpolation domain:}\\
If the value function is not analytic or the interpolation domain is large the degree of the Chebyshev domain $N$ increases. This makes the evaluation of the closed form approximation in each time step more costly. In this case it is often beneficial to split the domain into two subdomains and interpolate on each of the subdomains. On each of the subdomains, we require significantly less nodal points and the interpolation becomes more efficient. A suitable choice for the splitting point is the strike $K$ of the option (or $k=\log(K)$) for an equity option. By doing so the smoothing mentioned in the previous section is no longer required.\\

The splitting changes the dynamic Chebyshev algorithm in the following way. Assume the value function at time point $t_{u+1}$ is approximated by two Chebyshev interpolants, i.e. $V_{t_{u+1}}=\widehat{V}^{1}_{t_{u+1}}\1_{[\ul{x},k]} + \widehat{V}^{2}_{t_{u+1}}\1_{(k,\ol{x}]}$. In order to approximate $V_{t_{u}}$ we require the nodal values for two sets of nodal points $x_{k}^{1}$ and $x_{k}^{2}$ given by
\begin{align*}
V_{t_{u}}(x_{k}^{1})&=f\Big(g(t_{u},x_{k}^{1}),\sum_{j=0}^{N_{1}}c_{j}^{1}(t_{u+1})\EE^{\mathbb{Q}}\left[p_{j}(X_{t_{u+1}})\1_{[\ul{x},k]}\vert X_{t_{u}}=x_{k}^{1}\right]\\
&\quad + \sum_{j=0}^{N_{2}}c_{j}^{2}(t_{u+1})\EE^{\mathbb{Q}}\left[p_{j}(X_{t_{u+1}})\1_{(k,\ol{x}]}\vert X_{t_{u}}=x_{k}^{1}\right]\Big)
\end{align*}
and the equivalent expression for the values $V_{t_{u}}(x_{k}^{2})$. Hence, in the pre-computation step we calculate four different sets of conditional expectations for polynomials $p_{j}\1_{[\ul{x},k]}$ and $p_{j}\1_{(k,\ol{x}]}$ and starting values $x_{k}^{1}$ and $x_{k}^{2}$. In comparison to a Chebyshev interpolation on the whole domain $[\ul{x},\ol{x}]$ with $N$ points we can choose a lower $N_{1}$, $N_{2}$. If we set $N_{1}=N_{2}=N/2$, the number of conditional expectations which we have to compute in the pre-computation step is exactly the same. For the exposure calculation we need to divide the paths $X_{t_{u}}$ into the ones below the splitting point and the ones above the splitting point.

\section{Numerical experiments}
In this section, we investigate the dynamic Chebyshev method numerically by calculating the credit exposure profiles of European and path-dependent equity options and a Bermudan swaption. We analyse the accuracy of the exposure profiles produced by the dynamic Chebyshev method by comparing them to a full re-evaluation. Then we investigate the method's performance and compare it to the popular LSM approach. Moreover, we check the influence of the proposed splitting of the domain on the method's performance.

\subsection{Description of the experiments}
For the numerical experiments we consider four different products: A European put option and an up-and-out barrier call option in the Black-Scholes model, a Bermudan put option in the Merton jump diffusion model and a Bermudan receiver swaption in the Hull-White short rate model. In the Black-Scholes and the Hull-White model the risk factor is normally distributed and we can use the analytic formula for the conditional expectations of the Chebyshev polynomials.

We compute the expected exposure $EE_{t}^{price}$ under the pricing measure $\mathbb{Q}$ and the expected exposure $EE_{t}^{risk}$ under the real-world measure $\mathbb{P}$ as well as the potential future exposures $PFE_{t}^{price}$ and $PFE_{t}^{risk}$ under both measures. For the calculation of the exposure measures we use $50000$ and $150000$ simulation paths of the underlying risk factors and a time discretization of $50$ time steps per year. The relatively high number of simulation paths is needed to obtain a stable estimate of the PFE over the lifetime of the derivative. Since the PFE is a tail measure it is more sensitive to the number of simulations than the expected exposure.

We run the dynamic Chebyshev method for a different number of nodal points $N$ and the dynamic Chebyshev with splitting approach with $N_{1}=N_{2}=N/2$ nodal points. For the LSM we use the monomials up to degree $5$ plus the payoff of the product as basis functions in the regression. The pricing is done using $150000$ path of the underlying risk factors and then we use a second set of paths for the calculation of the exposure. Using two different sets of paths for pricing and exposure calculation reduces the bias of the LSM. See \cite{KarlssonJainOosterlee2016} for a description on how to use the LSM approach to calculate credit exposures under the pricing and the real-world measure. In our implementation of the LSM approach for exposure calculation we use $7$ basis functions for the European and Bermudan equity options, $8$ for the barrier option and $5$ basis function for the Bermudan swaptions. For the pricing we use a separate set of $150000$ simulation paths of the underlying risk factor.

For the experiments we introduce the following three asset price models and explain how we compute the corresponding generalized moments.\\

\textbf{The Black-Scholes model}\\
In the classical model of \cite{BlackScholes1973} the stock price process is modelled by the SDE
\begin{align*}
\text{d}S_{t}=\mu S_{t}\text{d}t + \sigma S_{t}\text{d}W_{t}.
\end{align*}
with drift $\mu$ and volatility $\sigma>0$ under the real-world measure $\mathbb{P}$. Under the pricing measure $\mathbb{Q}$ the drift equals $r$. Exploiting the fact that the log-returns $X_{t}=\log(S_{t}/S_{0})$ are normally distributed we can use the analytic formula for the generalized moments $\Gamma_{k,j}$. As model parameter we fix  volatility $\sigma=0.25$, real-world drift $\mu=0.1$, interest rate $r=0.03$ and initial stock price $S_{0}=100$.\\

\textbf{The Merton jump diffusion model}\\
The jump diffusion model introduced by \cite{Merton1976} adds jumps to the classical Black-Scholes model. The log-returns follow a jump diffusion with volatility $\sigma$ and added jumps arriving at rate $\lambda>0$ with normal distributed jump sizes according to $\mathcal{N}(\alpha,\beta^{2})$. The stock price under $\mathbb{P}$ is modelled by the SDE
\begin{align*}
\text{d}S_{t}=\mu S_{t}\text{d}t + \sigma S_{t}\text{d}W_{t} + \text{d}J_{t}
\end{align*}
for a compound Poisson process $J_{t}$ with rate $\lambda$. The characteristic function of the log-returns $X_{t}=\log(S_{t}/S_0)$ under the pricing measure $\mathbb{Q}$ is given by
\begin{align*}
\varphi(z)=exp\left(t\left(ibz - \frac{\sigma^{2}}{2}z^{2} + \lambda\left(e^{iz\alpha - \frac{\beta^{2}}{2}z^{2}}-1\right)\right)\right)
\end{align*} 
with risk-neutral drift
\begin{align*}
b=r-\frac{\sigma^{2}}{2}-\lambda\left(e^{\alpha+\frac{\beta^{2}}{2}}-1\right).
\end{align*}
In our experiments we calculate the conditional expectations $\Gamma_{k,j}$ using numerical integration and the Fourier transforms of the Chebyshev polynomials along with the characteristic function of $X_t$. We fix the parameters 
\[\sigma=0.25,\quad \alpha=-0.5,\quad \beta=0.4,\quad \lambda=0.4\quad r=0.03\quad \text{and} \quad  \mu=0.1\]
and initial stock price $S_{0}=100$.\\

\textbf{The Hull-White model}\\
The Hull-White model as described in Chaper 3.3 of \cite{BrigoMercurio2007} is a short rate model where the rate process $(r_{t})_{t\geq 0}$ is a mean reverting Ornstein–Uhlenbeck process described by the SDE
\begin{align*}
\text{d}r_{t}=(\theta(t) - ar_{t})\text{d}t + \sigma\text{d}W_{t}
\end{align*}
where the long term mean $\theta(t)$ can be fitted to the term structure of the market and the speed of mean reversion $a$ and the volatility $\sigma$ are constant. One can write $r_{r}=\alpha(t) + x_{t}$ for a deterministic function $\alpha(t)$ given by
\begin{align*}
\alpha(t)=f^{M}(0,t)+\frac{\sigma^{2}}{2a^{2}}(1-e^{-at})^{2}
\end{align*}
where $f^{M}(0,t)$ is the market forward rate for maturity $T$ obtained from market discount factors $P^{M}(0,T)$ via
\begin{align*}
f^{M}(0,T)=-\frac{\partial\ln(P^{M}(0,T))}{\partial T}.
\end{align*}
The process $(x_{t})_{t\geq0}$ is modelled by the SDE
\begin{align*}
\text{d}x_{t}=- ax_{t}\text{d}t + \sigma\text{d}W_{t} \quad x_{0}=0
\end{align*}
and $x_{t}\vert x_{s}=x_{0}$ is normally distributed with
\begin{align*}
\EE[x_{t}\vert x_{s}=x_{0}]=x_{0}e^{-a(t-s)},\quad\text{and}\quad\text{Var}[x_{t}\vert x_{s}=x_{0}]=\frac{\sigma^{2}}{2a}(1-e^{-2a(t-s)}).
\end{align*}
As parameters we fix $a_{q}=0.02$ and $\sigma_{q}=0.02$ under $\mathbb{Q}$ and $a_{p}=0.015$ and $\sigma_{p}=0.01$ under $\mathbb{P}$ and we assume a flat forward rate $f^{M}(0,t)=0.01$. All parameters are taken from \cite{FengJainKarlssonKandhaiOosterlee2016}.

\subsection{European option in the Black-Scholes model}
In this section, we calculate the expected exposure and the potential future exposure of a European put option in the Black-Scholes model. In this case, we have an analytic formula for the option price $V_{t}$ at any time point $t$ and we can investigate the accuracy of the dynamic Chebyshev method for exposure calculation. We consider an at-the-money option with strike $K=100$ and maturity $T=1$.\\

Figure \ref{fig:EE_PFE_BS_European_Analytic} shows the resulting exposure profiles and Table \ref{tab:EE_PFE_BS_European_Analytic_Reference} shows the values of the exposures at maturity. The expected exposure under the pricing measure is constant since
\begin{align*}
EE^{price}_{t}&=D(0,t)\EE^{\mathbb{Q}}[\max\{V_{t}(X_{t}),0\}]\\
&=D(0,t)\EE^{\mathbb{Q}}[D(t,T)\EE^{\mathbb{Q}}[g(X_{T})\vert X_{t}]]\\
&=D(0,T)\EE^{\mathbb{Q}}[g(X_{T})]=V_{0}
\end{align*}
for all European options. Under the real-world measure the positive drift of the underlying yields a decreasing exposure for a put option. The PFE increases under both measures since it is mainly driven by the diffusion term in the model.

We observe that the exposure profiles of the dynamic Chebyshev method and the true exposure profile are indistinguishable. In Table \ref{tab:EE_PFE_BS_European_Analytic} we see the corresponding relative errors for different Chebyshev N's. Here, the notation $DC_{N}$ refers to the dynamic Chebyshev method of degree $N$ and $DC_{N_{1},N_{2}}$ refers to the dynamic Chebyshev method with domain splitting of degree $N_{1}$ and $N_{2}$. The error is calculated as the maximum over the simulation period and displayed in relative terms with respect to the initial option price. For $N=128$ the error is below $10^{-4}$ for both quantities and under both measures. Already for $N=64$ nodal points or, if splitting is applied, $N_{1}=N_{2}=16$ the error is below $1\%$ in each cases. Figure \ref{fig:EE_PFE_Err_BS_European_LSM} shows the error in the exposure profiles over the option's lifetime of the dynamic Chebyshev method with $N=128$ and compares it with the LSM approach. Whereas the dynamic Chebyshev method is able to produce stable results, the LSM is only able to produces accurate prices at $t=0$ but adds additional simulation noise over the option's lifetime. For the PFE, the LSM has an relative error of nearly $2\%$.

Table \ref{tab:EE_PFE_BS_European_Analytic_Runtime} shows the corresponding runtimes for $M=50000$ and $M=150000$ simulation paths of the underlying risk factor. The runtimes of the dynamic Chebyshev method increases approximately linearly in $M$ and is in the same region as the runtime of the analytic pricer. Moreover, the measure under which the risk factors are simulated has no influence on the runtime of the method. The fact that the new numerical method is competitive in comparison to a analytic formula indicates a high efficiency of the approach. In in comparison to the LSM, the dynamic Chebyshev method for exposure calculation is as fast or faster and, as already seen, able to produce more accurate results. For example, the dynamic Chebyshev method with $N_{1}=N_{2}=32$ is in all cases more accurate than the LSM but also always faster.

\begin{figure}[H]
\begin{minipage}{.5\linewidth}
\centering
\subfloat[]{\includegraphics[scale=.48]{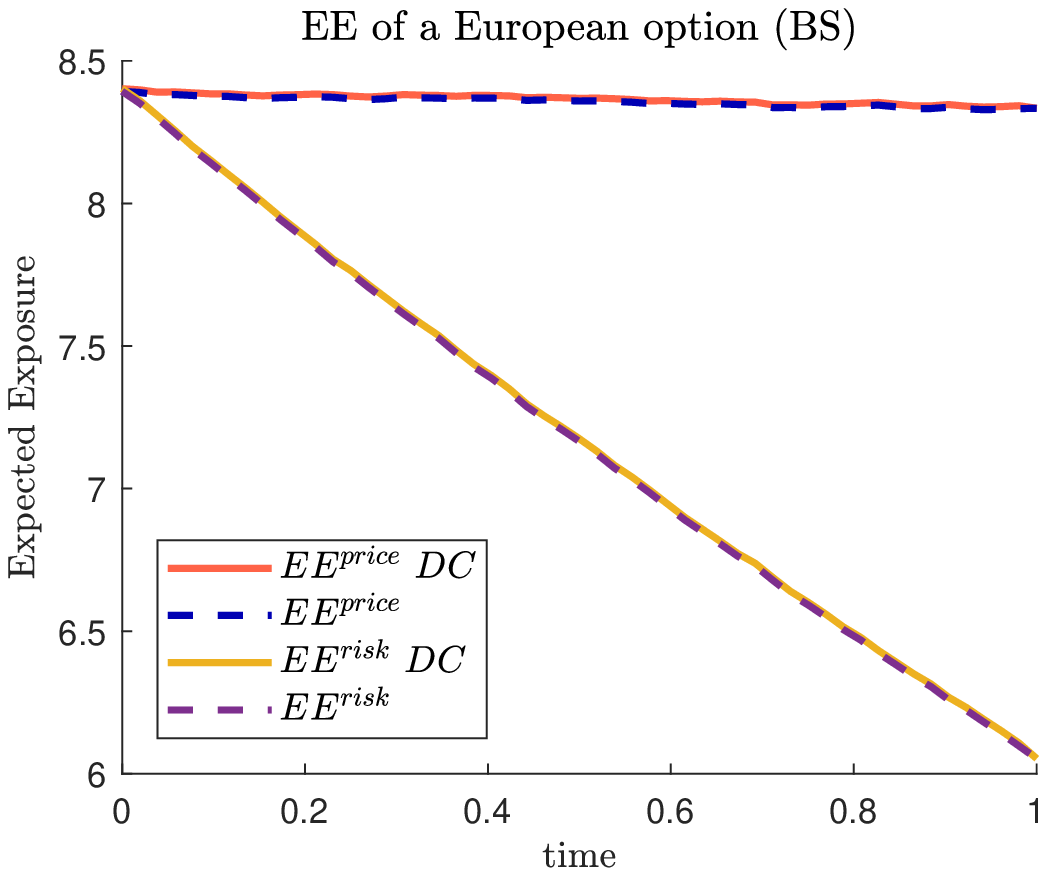}}
\end{minipage}%
\begin{minipage}{.5\linewidth}
\centering
\subfloat[]{\includegraphics[scale=.48]{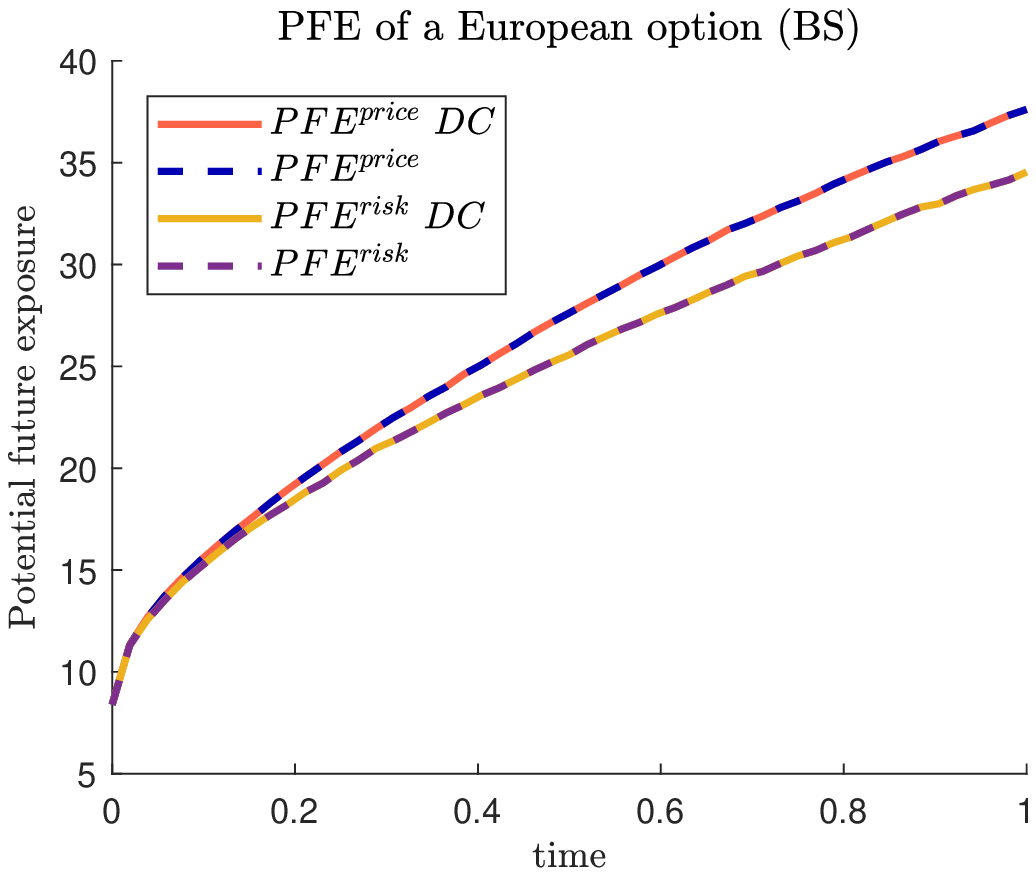}}
\end{minipage}
\caption{Expected exposure (left figure) and potential future exposure (right figure) of a European put option in the Black-Scholes model, calculated under the pricing measure $\mathbb{Q}$ and the real world measure $\mathbb{P}$ using $M=150000$ simulations.}
\label{fig:EE_PFE_BS_European_Analytic}
\end{figure}

\begin{figure}[H]
\begin{minipage}{.5\linewidth}
\centering
\subfloat[]{\includegraphics[scale=.48]{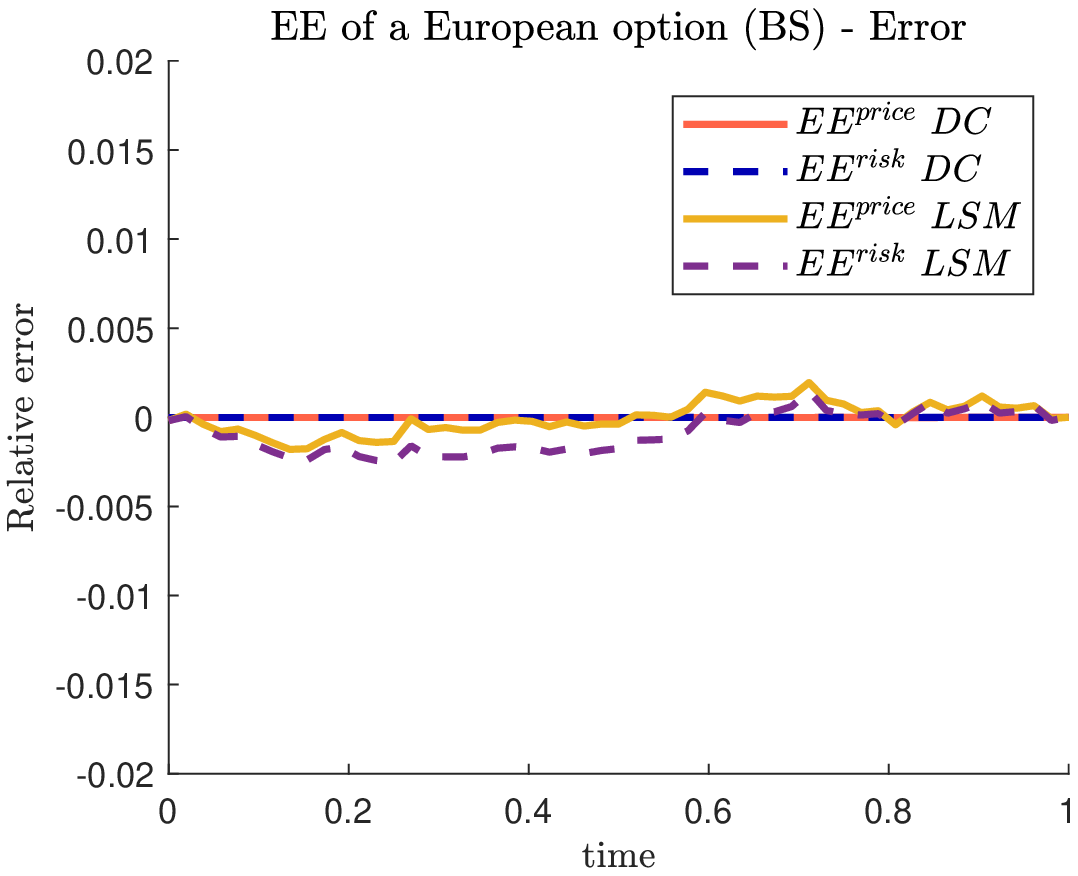}}
\end{minipage}%
\begin{minipage}{.5\linewidth}
\centering
\subfloat[]{\includegraphics[scale=.48]{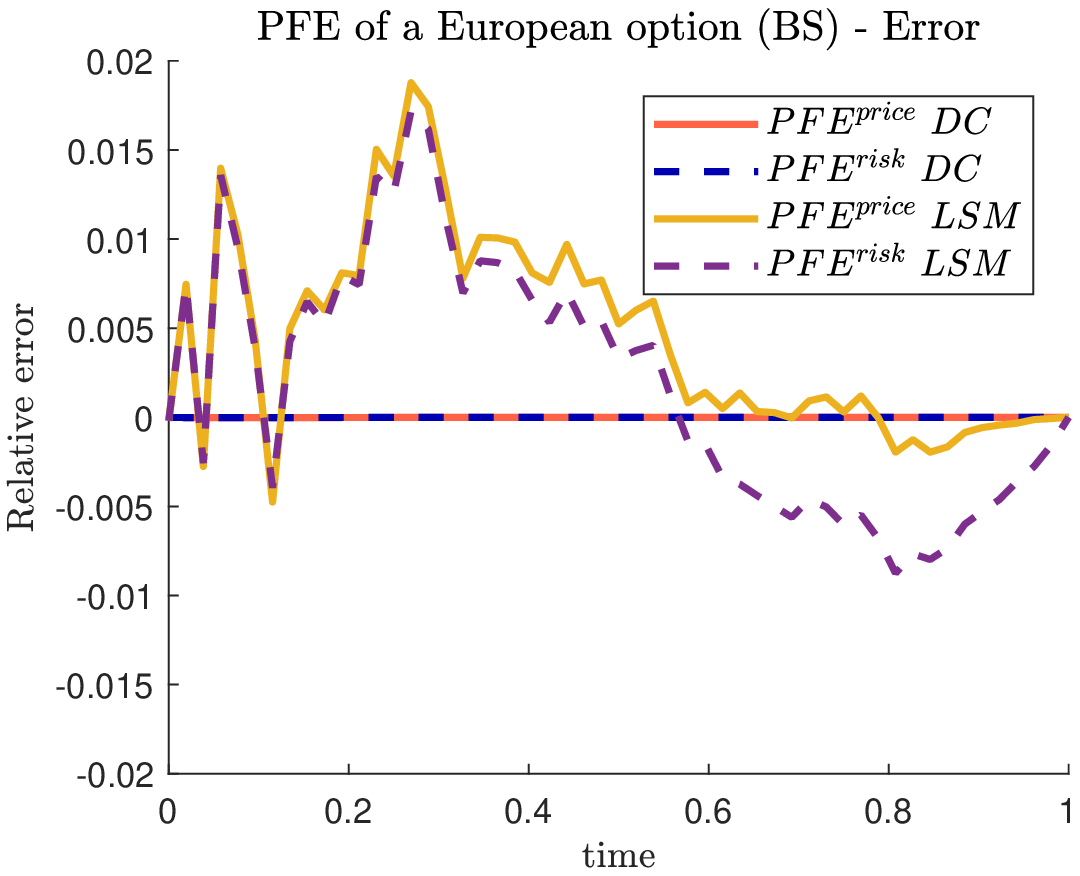}}
\end{minipage}
\caption{Relative error of the expected exposure (left figure) and potential future exposure (right figure) of a European put option in the Black-Scholes model of the dynamic Chebyshev method and the least-squares method. Calculated under the pricing measure $\mathbb{Q}$ and the real world measure $\mathbb{P}$ using $M=150000$ simulations.}
\label{fig:EE_PFE_Err_BS_European_LSM}
\end{figure}

\begin{table}[H]
\begin{center}
\begin{tabular}{ccccc}
\hline 
Price & $EE^{price}_{T}$ & $PFE^{price}$ & $EE^{risk}_{T}$ & $PFE^{risk}$\\
8.3930 & 8.3338 & 37.6163 & 6.0530 & 34.5426\\
\hline 
\end{tabular} 
\caption{Reference values for option price, EE and PFE of a European put option in the Black-Scholes  model using $M=150000$ simulations.}
\label{tab:EE_PFE_BS_European_Analytic_Reference} 
\end{center}
\end{table}

\begin{table}[H]
\begin{center}
\begin{tabular}{lccccc}
\hline 
 & Price & $EE^{price}$ & $PFE^{price}_{T}$ & $EE^{risk}$ & $PFE^{risk}_{T}$\\
\hline
$DC_{32}$ & 0.0020 & 0.0095 & 0.0020 & 0.0138 & 0.0020\\
$DC_{64}$ & 0.0011 & 0.0012 & 0.0011 & 0.0017 & 0.0011\\
$DC_{128}$ & 0.0000 & 0.0000 & 0.0000 & 0.0000 & 0.0000\\
$DC_{16, 16}$ split & 0.0003 & 0.0003 & 0.0020 & 0.0003 & 0.0020\\
$DC_{32, 32}$ split & 0.0000 & 0.0000 & 0.0011 & 0.0000 & 0.0011\\
$DC_{64, 64}$ split & 0.0000 & 0.0000 & 0.0000 & 0.0000 & 0.0000\\
LSM & 0.0002 & 0.0019 & 0.0188 & 0.0025 & 0.0172\\
\hline 
\end{tabular} 
\caption{Maximal relative error of option price, EE and PFE of a European put option in the Black-Scholes model for $M=150000$ simulations. Comparison of the dynamic Chebyshev approach for different $N$ with an analytic formula.}
\label{tab:EE_PFE_BS_European_Analytic} 
\end{center}
\end{table}

\begin{table}[H]
\begin{center}
\begin{tabular}{llcccccccc}
\hline 
& Sim. & $DC_{32}$ & $DC_{64}$ & $DC_{128}$ & $DC_{16,16}$ & $DC_{32,32}$ & $DC_{64,64}$ & LSM & BS\\  
\hline 
\multirow{2}{*}{$\mathbb{Q}$}
& $50k$ & 0.16s & 0.17s & 0.25s & 0.18s & 0.20s & 0.22s & 0.49s & 0.17s\\
& $150k$ & 0.53s & 0.58s & 0.76s & 0.59s & 0.62s & 0.67s & 0.77s & 0.63s\\
\multirow{2}{*}{$\mathbb{P}$}
& $50k$ & 0.17s & 0.19s & 0.26s & 0.20s & 0.21s & 0.24s & 0.49s & 0.18s\\
& $150k$ & 0.54s & 0.60s & 0.71s & 0.59s & 0.62s & 0.68s & 0.77s & 0.63s\\
\hline
\end{tabular} 
\caption{Runtimes of the exposure calculation using the dynamic Chebyshev method for different $N$. Comparison with the analytic Black-Scholes formula.}
\label{tab:EE_PFE_BS_European_Analytic_Runtime} 
\end{center}
\end{table}

Overall, the experiment confirms that the new approach is able to produce accurate credit exposure profiles both under the pricing measure $\mathbb{Q}$ and the real-world measure $\mathbb{P}$. Moreover, we have seen that the computations under the real-world measure are as fast as the computation under the pricing measure. For the European put option the runtimes were comparable to using the analytic Black-Scholes formula. Building on these very promising results we will investigate the performance for derivatives which are path-dependent and therefore in general more difficult to price.

\subsection{Barrier option in the Black-Scholes model}
In this section, we calculate the expected exposure and the potential future exposure of a discretely monitored up-and-out barrier call option in the Black-Scholes model. Due to the additional barrier the option becomes path-dependent and there is no longer an analytic solution. In order to compute reference prices we use the COS method provided in the benchmarking project of \cite{Benchop2015}. We consider an option with strike $K=100$, barrier $B=130$ and maturity $T=1$. We assume that the barrier option is discretely monitored and the monitoring dates coincide with dates for the exposure calculation.\\

Figure \ref{fig:EE_PFE_BS_Barrier_COS} shows the resulting exposure profiles and Table \ref{tab:EE_PFE_BS_Barrier_COS_Reference} shows the values of the exposures at maturity. The expected exposure under the pricing measure is constant over time which can be justified by the same arguments as for the European option. Under the real-world measure, the expected exposure increases slightly and for the PFE we observe also an increase. In comparison to the European option we see a slower increase in the beginning and a faster increase close to maturity. Here, we observe the effect of the barrier which means that an increase in the stock price also leads to a higher risk of triggering the barrier and a zero exposure afterwards. This effect is more problematic for longer a time to maturity. As for the European option, the exposure profiles of the dynamic Chebyshev method and the exposure profile of the full re-evaluation are indistinguishable. Figure \ref{fig:EE_PFE_Err_BS_Barrier_LSM} shows the error of the exposure profiles computed with the dynamic Chebyshev method for $N=64$ and with the least squares Monte Carlo approach. For the least squares Monte Carlo approach we added an additional basis function compared to the European version to better fit the barrier. In Table \ref{tab:EE_PFE_BS_Barrier_COS} we see the corresponding relative errors for different Chebyshev N's and the error of the LSM. The error is calculated as the maximum over the simulation period and displayed in relative terms. For $N=64$ the error is below $10^{-4}$ for both quantities. Figure \ref{fig:EE_PFE_Err_BS_Barrier_LSM} shows the relative error of the dynamic Chebyshev method and the LSM over the option's lifetime. We can again observe that a strong fluctuation in the error of the LSM and a stable and very low error for the dynamic Chebyshev method.

Table \ref{tab:EE_PFE_BS_Barrier_COS_Runtime} shows the corresponding runtimes for $M=50000$ and $M=150000$ simulation paths of the underlying risk factor. We observe that the dynamic Chebyshev method is more than $100$ times faster than doing a full-revaluation approach using an already competitive pricer. Compared to the LSM the dynamic Chebyshev method produces more accurate estimates while also being faster. The barrier yields a faster interpolation domain and therefore a lower number of interpolation nodes for the dynamic Chebyshev method. On the other side, the LSM does not profit from the barrier but we had to add an additional basis function to achieve a satisfying accuracy.

\begin{figure}[H]
\begin{minipage}{.5\linewidth}
\centering
\subfloat[]{\includegraphics[scale=.48]{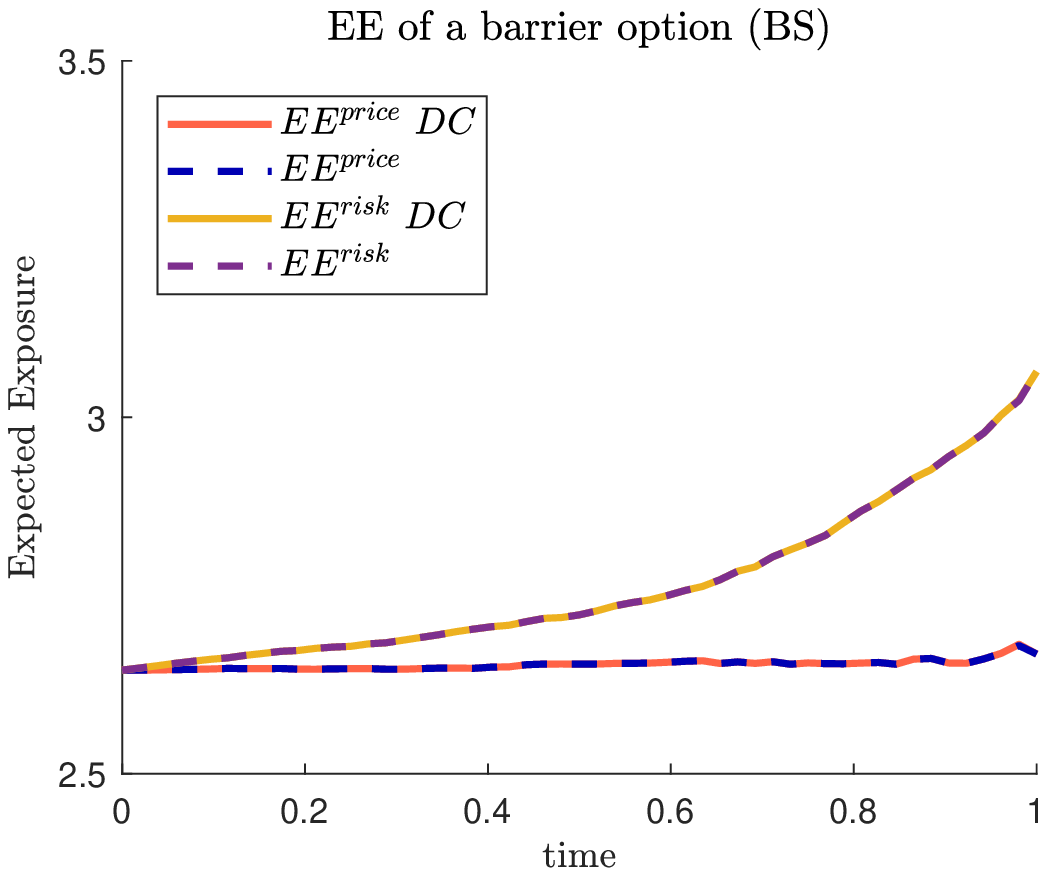}}
\end{minipage}%
\begin{minipage}{.5\linewidth}
\centering
\subfloat[]{\includegraphics[scale=.48]{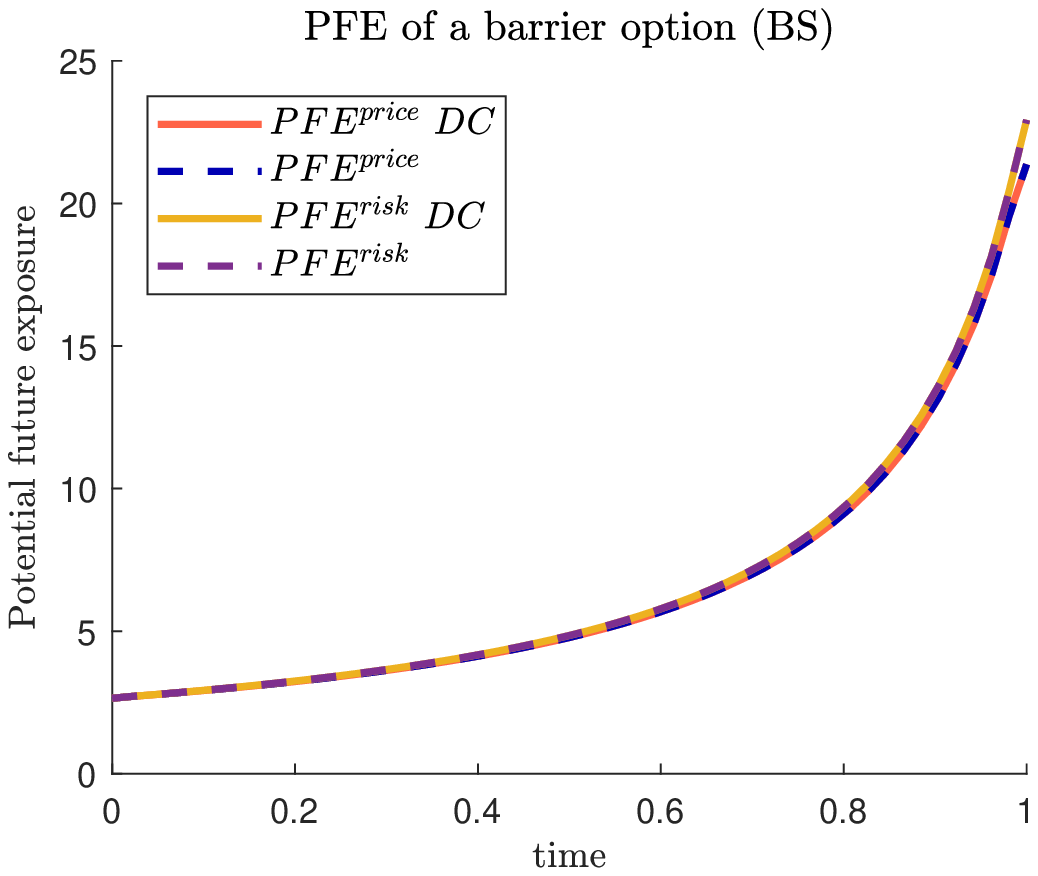}}
\end{minipage}
\caption{Expected exposure (left figure) and potential future exposure (right figure) of a barrier call option in the Black-Scholes model, calculated under the pricing measure $\mathbb{Q}$ and the real world measure $\mathbb{P}$ using $M=150000$ simulations.}
\label{fig:EE_PFE_BS_Barrier_COS}
\end{figure}

\begin{figure}[H]
\begin{minipage}{.5\linewidth}
\centering
\subfloat[]{\includegraphics[scale=.48]{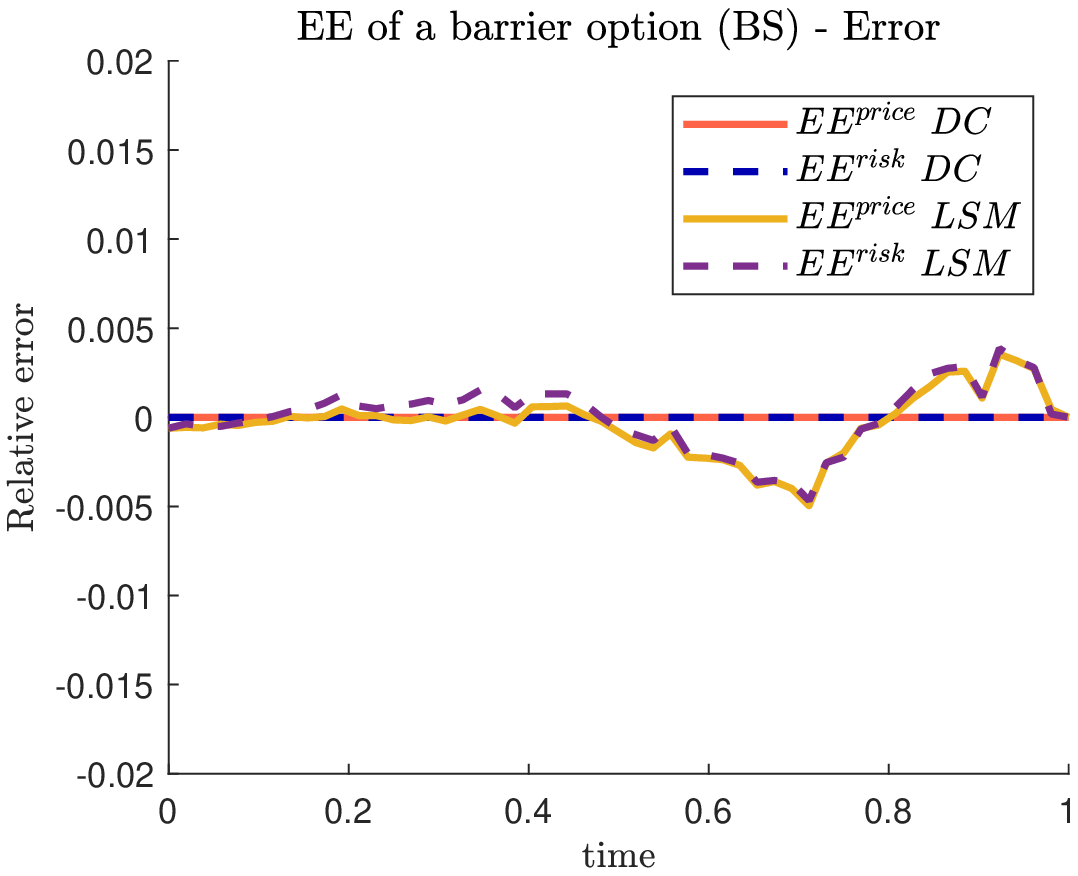}}
\end{minipage}%
\begin{minipage}{.5\linewidth}
\centering
\subfloat[]{\includegraphics[scale=.48]{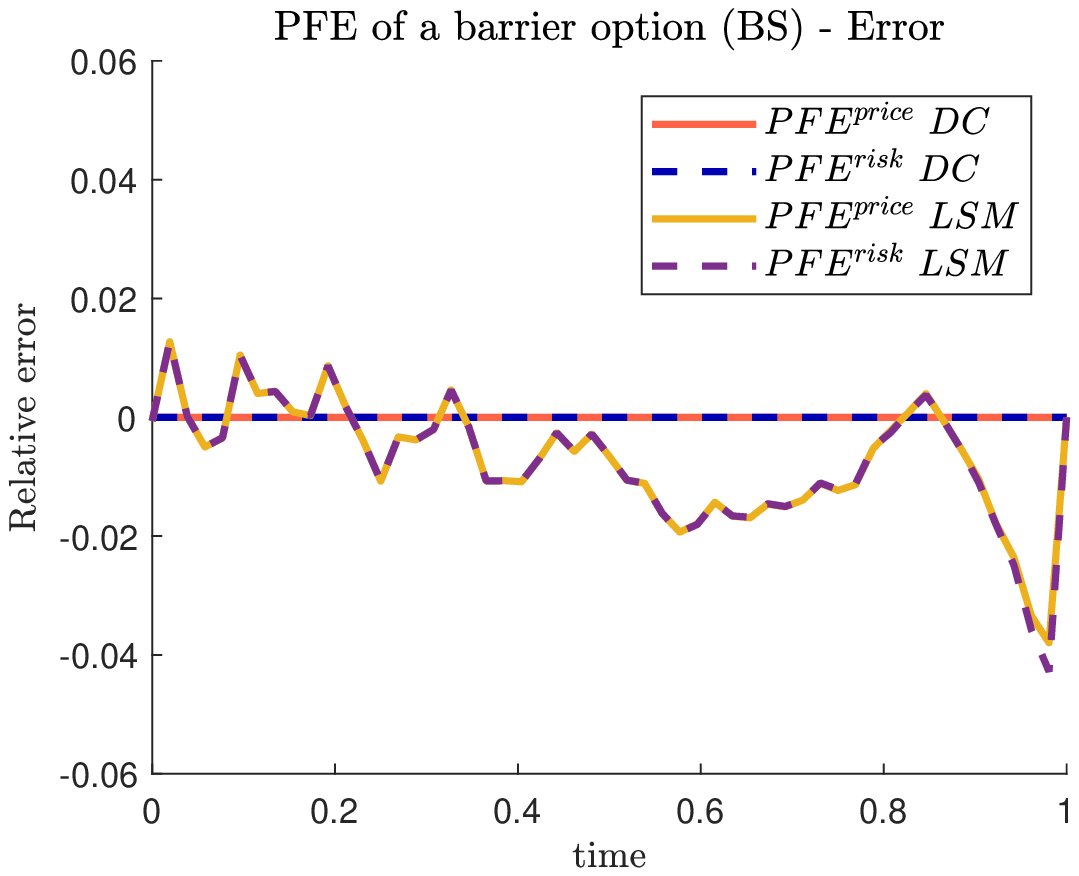}}
\end{minipage}
\caption{Relative error of the expected exposure (left figure) and potential future exposure (right figure) of a barrier up-and-out call option in the Black-Scholes model of the dynamic Chebyshev method and the least-squares method. Calculated under the pricing measure $\mathbb{Q}$ and the real world measure $\mathbb{P}$ using $M=150000$ simulations.}
\label{fig:EE_PFE_Err_BS_Barrier_LSM}
\end{figure}

\begin{table}[H]
\begin{center}
\begin{tabular}{ccccc}
\hline 
Price & $EE^{price}_{T}$ & $PFE^{price}$ & $EE^{risk}_{T}$ & $PFE^{risk}$\\
2.6453 & 2.6678 & 21.3718 & 3.0641 & 22.9297\\
\hline 
\end{tabular} 
\caption{Reference values for option price, EE and PFE of a barrier call option in the Black-Scholes  model using $M=150000$ simulations.}
\label{tab:EE_PFE_BS_Barrier_COS_Reference} 
\end{center}
\end{table}

\begin{table}[H]
\begin{center}
\begin{tabular}{lccccc}
\hline 
 & Price & $EE^{price}$ & $PFE^{price}_{T}$ & $EE^{risk}$ & $PFE^{risk}_{T}$\\
\hline
$DC_{16}$ & 0.0067 & 0.0172 & 0.0081 & 0.0133 & 0.0069\\
$DC_{32}$ & 0.0000 & 0.0006 & 0.0001 & 0.0004 & 0.0001\\
$DC_{64}$ & 0.0000 & 0.0000 & 0.0000 & 0.0000 & 0.0000\\
LSM & 0.0006 & 0.0050 & 0.0380 & 0.0047 & 0.0428\\
\hline 
\end{tabular} 
\caption{Maximal relative error of option price, EE and PFE of a barrier call option in the Black-Scholes model for $M=150000$ simulations. Comparison of the dynamic Chebyshev approach for different $N$ and the LSM approach with a full re-evaluation.}
\label{tab:EE_PFE_BS_Barrier_COS} 
\end{center}
\end{table}

\begin{table}[H]
\begin{center}
\begin{tabular}{llccccc}
\hline 
& Sim. & $DC_{16}$ & $DC_{32}$ & $DC_{64}$ & LSM & Full re-eval\\  
\hline 
\multirow{2}{*}{$\mathbb{Q}$}
& $50k$ & 0.17s & 0.17s & 0.19s & 1.28s & 24.2s\\
& $150k$ & 0.58s & 0.60s & 0.66s & 2.11s & 70.8s\\
\multirow{2}{*}{$\mathbb{P}$}
& $50k$ & 0.18s & 0.19s & 0.21s & 1.29s & 22.8s\\
& $150k$ & 0.59s & 0.62s & 0.68s & 2.03s & 66.5s\\
\hline
\end{tabular} 
\caption{Runtimes of the exposure calculation of a barrier call option using the dynamic Chebyshev method for different $N$. Comparison with a full re-evaluation using the COS method.}
\label{tab:EE_PFE_BS_Barrier_COS_Runtime} 
\end{center}
\end{table}

\subsection{Bermudan option in the Merton jump-diffusion model}
Here, we consider a Bermudan put option in the Merton jump-diffusion model. The early-exercise feature makes the option path-dependent and the jump component of the stock price model poses an additional computational challenge. Similar to the barrier option we can again use the COS method provided in the benchmarking project of \cite{Benchop2015} for the calculation of reference prices. We consider an option with strike $K=100$, maturity $T=1$ and we assume that the dates used for the exposure calculation are also the exercise dates of the option.\\

Figure \ref{fig:EE_PFE_MRT_Bermudan_COS} shows the resulting exposure profiles and Table \ref{tab:EE_PFE_MRT_Bermudan_COS_Reference} shows the values of the exposures at maturity. We observe a decreasing expected exposure under both measures due to the early exercise feature of the option. For the PFE we observe an increasing exposure in the beginning resulting from the diffusion term and a decreasing exposure afterwards. As for the European option, the exposure profiles of the dynamic Chebyshev method and the exposure profile of the full re-evaluation are indistinguishable. In Table \ref{tab:EE_PFE_MRT_Bermudan_COS} we see the corresponding relative errors for different Chebyshev N's. Due to the early-exercise feature, a higher number of nodal points is required for a similar accuracy of the option prices. Moreover, an exact estimation of the exercise barrier is critical for a correct estimation of the exposure. A miscalculation of the exercise barrier at time point $t_{u}$ does not only influence the exposure $EE_{t_{u}}$ but also the exposure at all future time points. For example, a barrier that is too low means that the option is exercised for too many paths and the exposure at future time points is underestimated. Figure \ref{fig:EE_PFE_Err_MRT_Bermudan_LSM} shows the relative error of the dynamic Chebyshev method and the LSM over the option's lifetime. We can see that the LSM struggles to provide accurate estimations in the tail and thus an accurate value for the PFE. In contrast, the dynamic Chebyshev method produces stable and accurate for both quantities.

Table \ref{tab:EE_PFE_MRT_Bermudan_COS_Runtime} shows the corresponding runtimes for $M=50000$ and $M=150000$ simulation paths of the underlying risk factor. We observe that the dynamic Chebyshev method is more than $100$ times faster than doing a full-revaluation. The comparison to the LSM shows again that the dynamic Chebyshev method is able to deliver both, more accurate results and faster runtimes.

\begin{figure}[H]
\begin{minipage}{.5\linewidth}
\centering
\subfloat[]{\includegraphics[scale=.48]{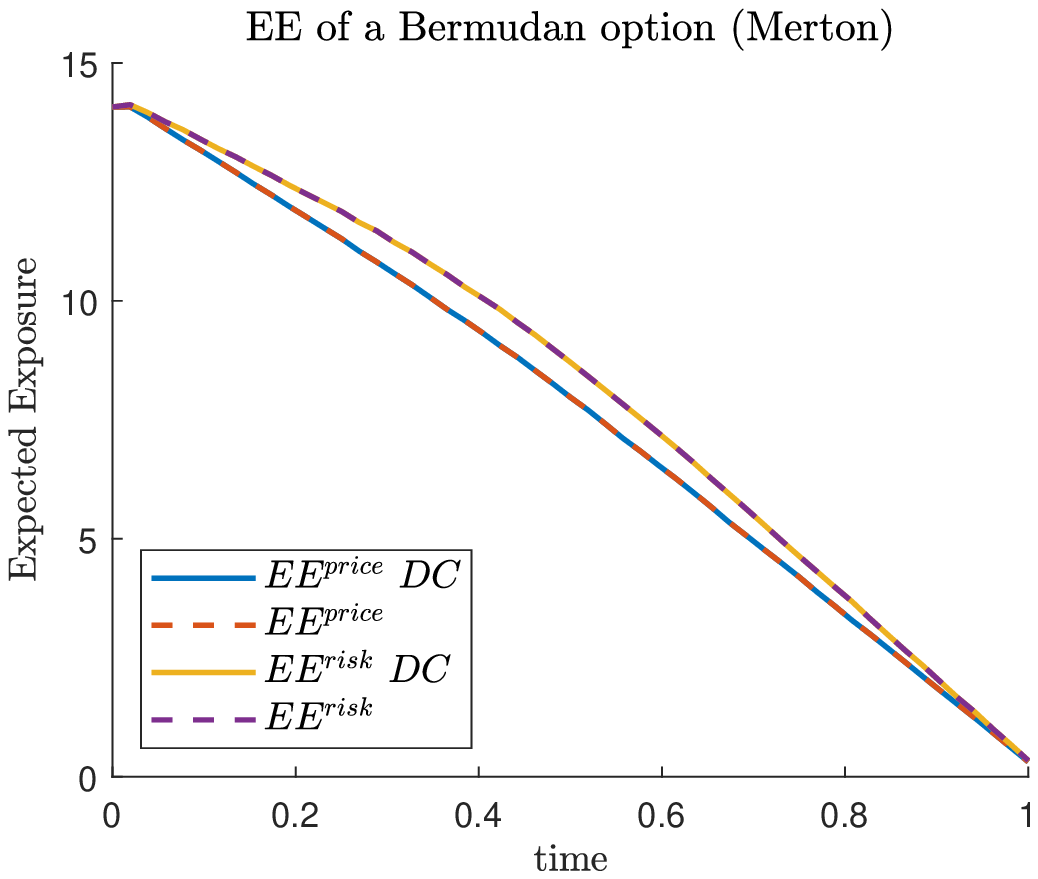}}
\end{minipage}%
\begin{minipage}{.5\linewidth}
\centering
\subfloat[]{\includegraphics[scale=.48]{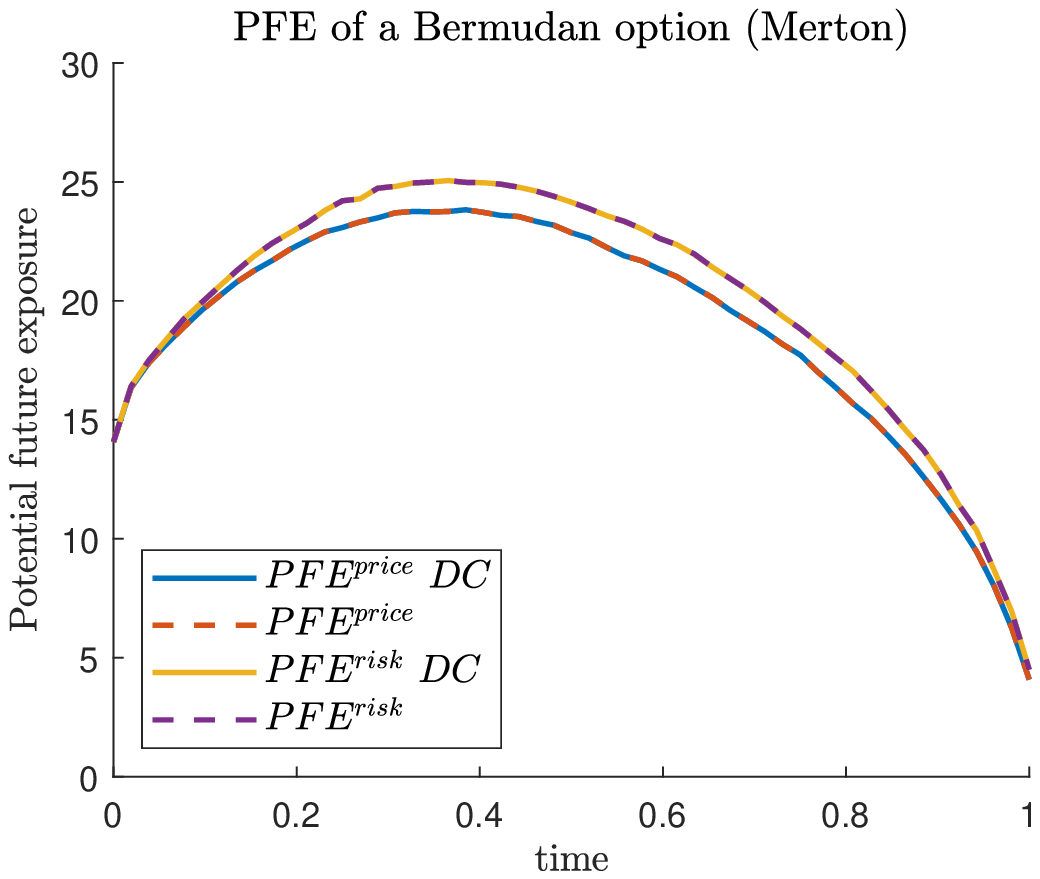}}
\end{minipage}
\caption{Expected exposure (left figure) and potential future exposure (right figure) of a Bermudan put option in the Merton jump-diffusion model, calculated under the pricing measure $\mathbb{Q}$ and the real world measure $\mathbb{P}$ using $M=150000$ simulations.}
\label{fig:EE_PFE_MRT_Bermudan_COS}
\end{figure}

\begin{figure}[H]
\begin{minipage}{.5\linewidth}
\centering
\subfloat[]{\includegraphics[scale=.48]{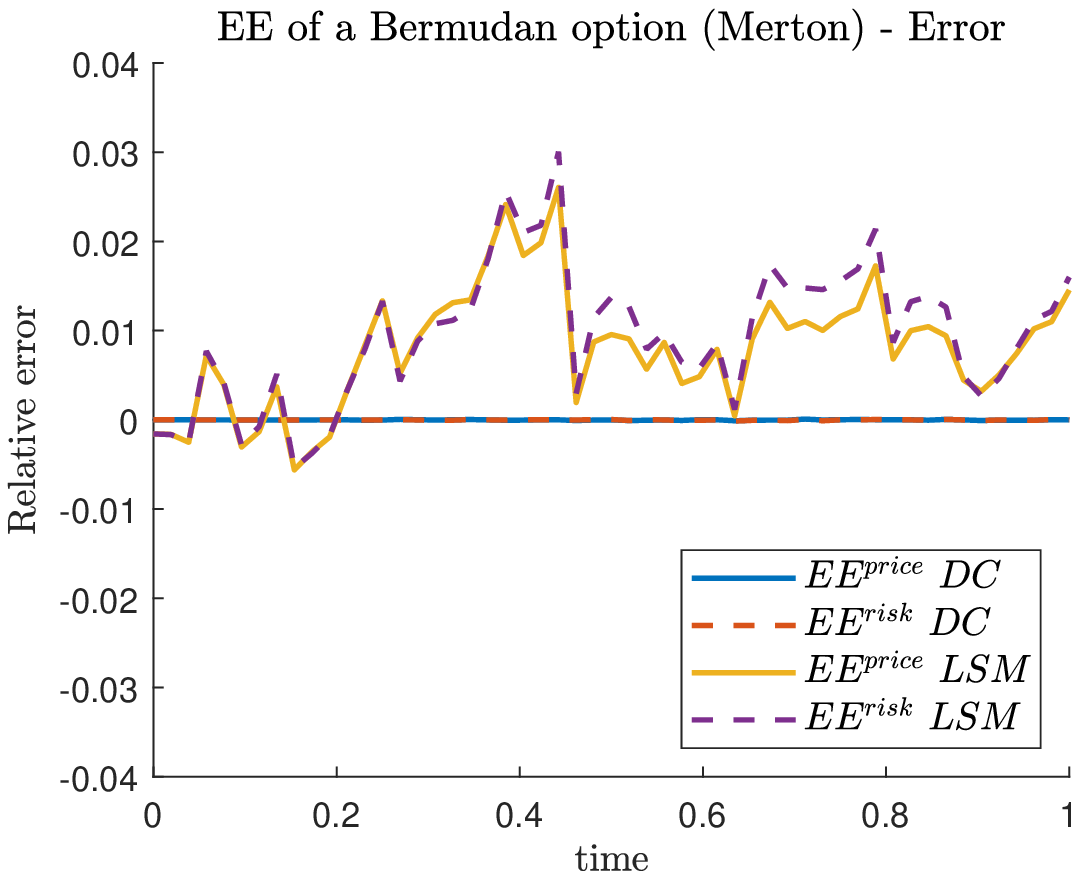}}
\end{minipage}%
\begin{minipage}{.5\linewidth}
\centering
\subfloat[]{\includegraphics[scale=.48]{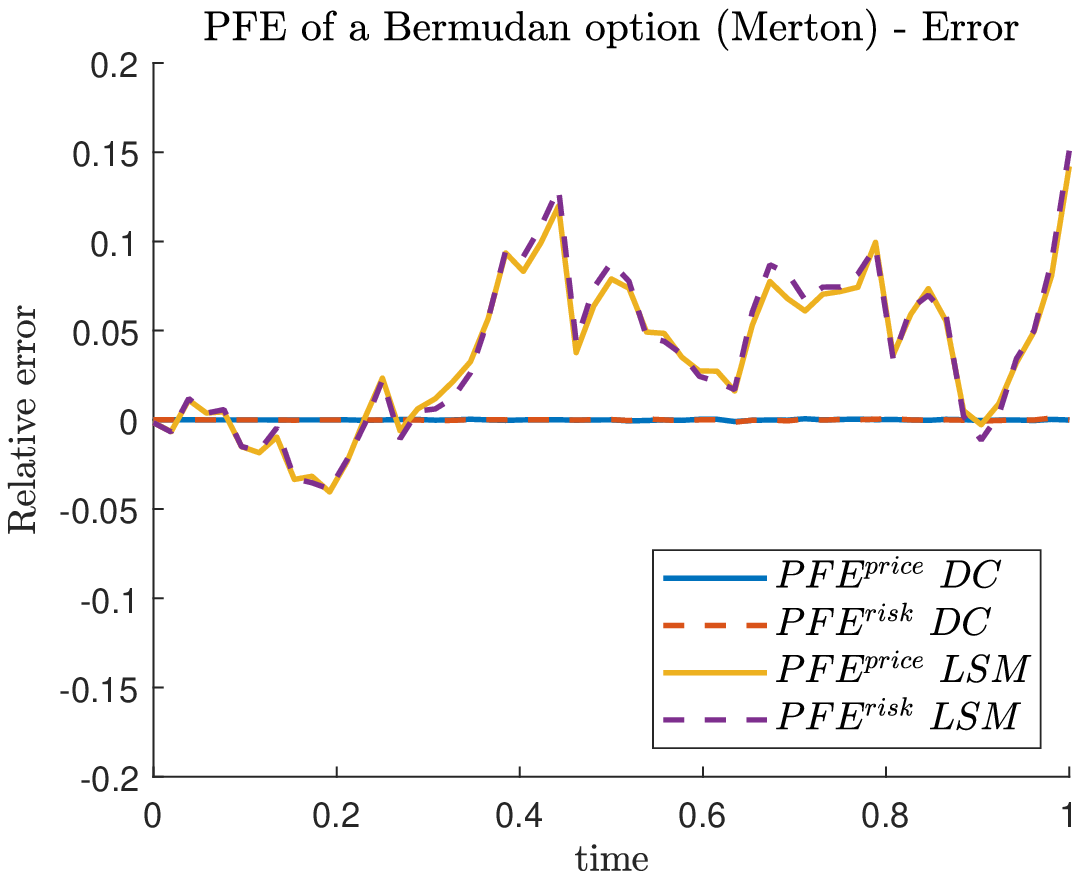}}
\end{minipage}
\caption{Relative error w.r.t. to the initial stock price of the expected exposure (left figure) and potential future exposure (right figure) of a Bermudan put option in the Merton jump-diffusion model of the dynamic Chebyshev method and the least-squares method. Calculated under the pricing measure $\mathbb{Q}$ and the real world measure $\mathbb{P}$ using $M=150000$ simulations.}
\label{fig:EE_PFE_Err_MRT_Bermudan_LSM}
\end{figure}

\begin{table}[H]
\begin{center}
\begin{tabular}{ccccc}
\hline 
Price & $EE^{price}_{T}$ & $PFE^{price}$ & $EE^{risk}_{T}$ & $PFE^{risk}$\\
14.0739 & 0.3144 & 4.1404 & 0.3601 & 4.6307\\
\hline 
\end{tabular} 
\caption{Reference values for option price, EE and PFE of a Bermudan put option in the Merton jump-diffusion model using $M=150000$ simulations.}
\label{tab:EE_PFE_MRT_Bermudan_COS_Reference} 
\end{center}
\end{table}

\begin{table}[H]
\begin{center}
\begin{tabular}{lccccc}
\hline 
 & Price & $EE^{price}$ & $PFE^{price}_{T}$ & $EE^{risk}$ & $PFE^{risk}_{T}$\\
\hline
$DC_{128}$ & 0.0001 & 0.0032 & 0.0433 & 0.0039 & 0.0568\\
$DC_{256}$ & 0.0000 & 0.0005 & 0.0028 & 0.0007 & 0.0032\\
$DC_{512}$ & 0.0000 & 0.0001 & 0.0007 & 0.0001 & 0.0007\\
$DC_{64, 64}$ & 0.0000 & 0.0020 & 0.0098 & 0.0023 & 0.0125\\
$DC_{128, 128}$ & 0.0000 & 0.0003 & 0.0018 & 0.0005 & 0.0028\\
$DC_{256, 256}$ & 0.0000 & 0.0001 & 0.0008 & 0.0001 & 0.0006\\
LSM & 0.0069 & 0.0277 & 0.1463 & 0.0331 & 0.1533\\
\hline 
\end{tabular} 
\caption{Maximal relative error w.r.t. to the initial stock price of option price, EE and PFE of a Bermudan put option in the Merton jump-diffusion model for $M=150000$ simulations. Comparison of the dynamic Chebyshev approach for different $N$ and the LSM approach with a full re-evaluation.}
\label{tab:EE_PFE_MRT_Bermudan_COS} 
\end{center}
\end{table}

\begin{table}[H]
\begin{center}
\begin{tabular}{llcccccccc}
\hline 
& Sim. & $DC_{128}$ & $DC_{256}$ & $DC_{512}$ & $DC_{64,64}$ & $DC_{128,128}$ & $DC_{256,256}$ & LSM & full\\  
\hline 
\multirow{2}{*}{$\mathbb{Q}$}
& $50k$ & 0.51s & 0.64s & 1.01s & 0.51s & 0.62s & 0.84s & 1.99s & 55s\\
& $150k$ & 1.45s & 1.70s & 2.39s & 1.47s & 1.64s & 1.99s & 3.20s & 162s\\
\multirow{2}{*}{$\mathbb{P}$}
& $50k$ & 0.46s & 0.61s & 0.93s & 0.48s & 0.59s & 0.83s & 1.97s & 57s\\
& $150k$ & 1.45s & 1.70s & 2.35s & 1.46s & 1.63s & 1.99s & 3.17s & 163s\\
\hline
\end{tabular} 
\caption{Runtimes of the exposure calculation of a barrier call option using the dynamic Chebyshev method for different $N$. Comparison with a full re-evaluation using the COS method.}
\label{tab:EE_PFE_MRT_Bermudan_COS_Runtime} 
\end{center}
\end{table}

\subsection{Bermudan swaption in the Hull-White model}
Here, we consider a Bermudan receiver swaption in the Hull-White model. Similar to the equity case, the early-exercise feature makes the option path-dependent and poses an additional computational challenge. Additionally, the payoff function is more complex and requires the pricing of a reciever swap. In the Hull-White model the prices of zero coupon bonds and swaps are still available analytically, see \cite{BrigoMercurio2007}. We consider a swaption with strike $K=0.01094$ and maturity $T$ which can be exercised yearly starting at $T_{1}=1$ and the swap ends terminates at $T+1$ and payments are also exchanged on a yearly basis. A detailed description of the pricing problem can be found in \cite{FengJainKarlssonKandhaiOosterlee2016}. From this paper we also obtain a reference price of $V_{0}=5.463$. We assume that the swaption is cash-settled and hence there is no credit exposure after the option is exercised.\\

Figure \ref{fig:EE_PFE_HW_Swaption} shows the resulting exposure profiles and Table \ref{tab:EE_PFE_HW_Swaption_Reference} shows the price and the values of the exposures at maturity. The price of the dynamic Chebyshev method is the same as the reference price $V_{0}$. Similarly to the equity Bermudan option, the expected exposure decreases over time and the PFE increases first and then decreases. The difference in the profiles comes from the number of exercise dates. Here, the swaption is only exerciseable once per year and the exposure jumps down at these days. In Table \ref{tab:EE_PFE_HW_Swaption} we see the corresponding relative errors for different Chebyshev N's. Here we used a dynamic Chebyshev method with higher accuracy to compute reference prices. As for an equity Bermudan option, an exact estimation of the exercise barrier is critical for a correct estimation of the exposure and leads to a higher $N$. However, since the volatility is lower, the interpolation domain is smaller and we need less nodes than for the Bermudan equity option. Figure \ref{fig:EE_PFE_Err_HW_Swaption_LSM} shows the relative error of the dynamic Chebyshev method and the LSM over the option's lifetime.

Table \ref{tab:EE_PFE_HW_Swaption_Runtime} shows the corresponding runtimes for $M=50000$ and $M=150000$ simulation paths of the underlying risk factor. Overall the runtime a slightly slower than for the equity products since the maturity is with five years much longer. The comparison of the dynamic Chebyshev method with the LSM reveals again a significantly higher efficiency. This is especially the case when it comes to the computation of the tail measure PFE.

\begin{figure}[H]
\begin{minipage}{.5\linewidth}
\centering
\subfloat[]{\includegraphics[scale=.48]{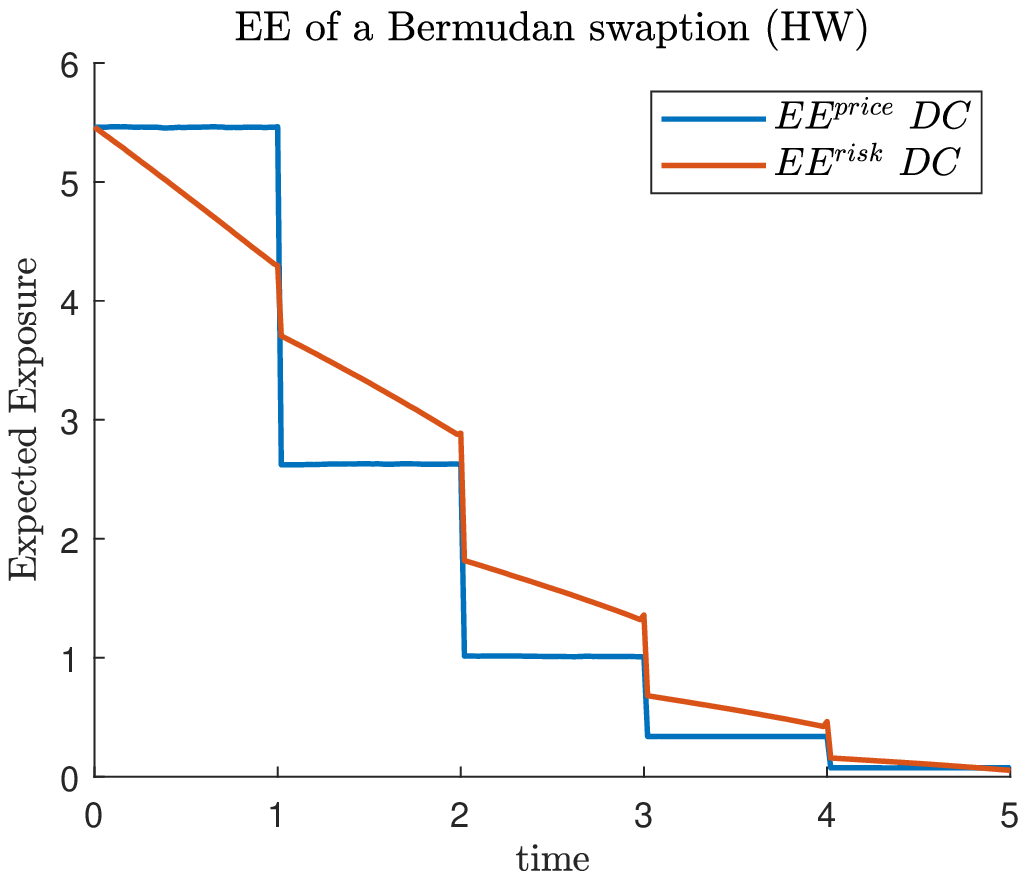}}
\end{minipage}%
\begin{minipage}{.5\linewidth}
\centering
\subfloat[]{\includegraphics[scale=.48]{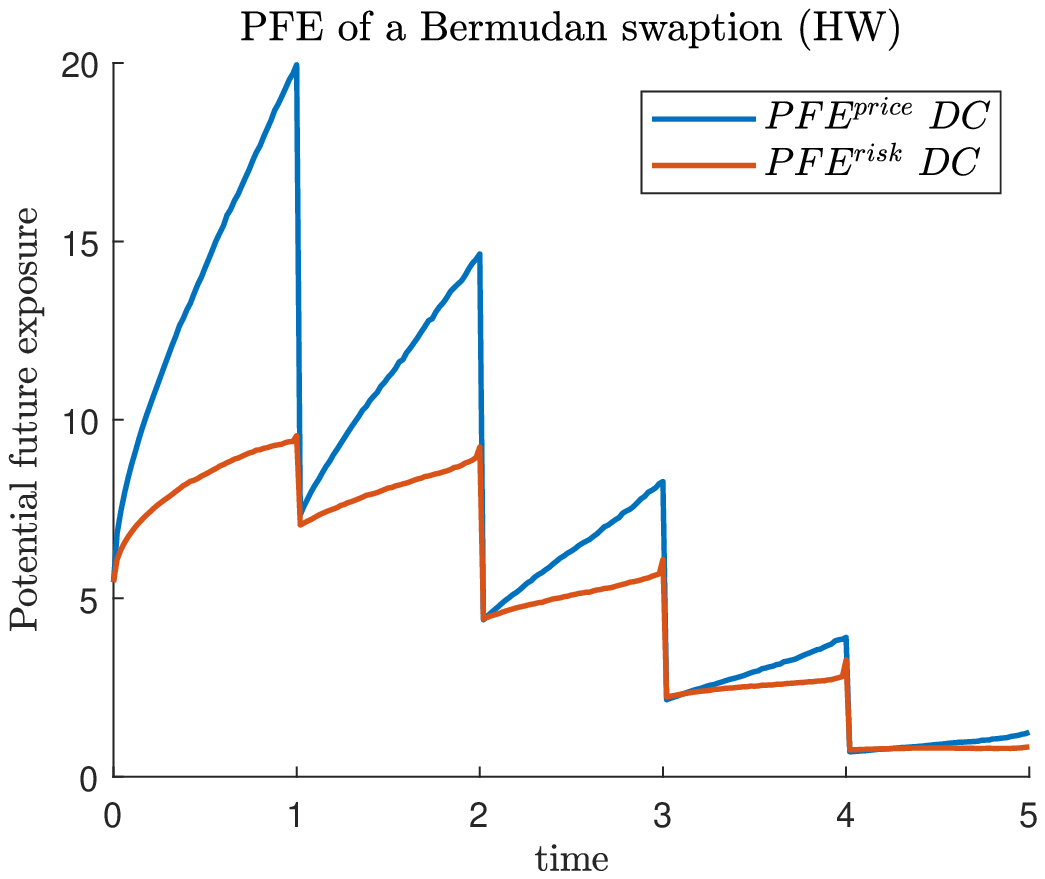}}
\end{minipage}
\caption{Expected exposure (left figure) and potential future exposure (right figure) of a Bermudan put option in the Merton jump-diffusion model, calculated under the pricing measure $\mathbb{Q}$ and the real world measure $\mathbb{P}$ using $M=150000$ simulations.}
\label{fig:EE_PFE_HW_Swaption}
\end{figure}

\begin{figure}[H]
\begin{minipage}{.5\linewidth}
\centering
\subfloat[]{\includegraphics[scale=.48]{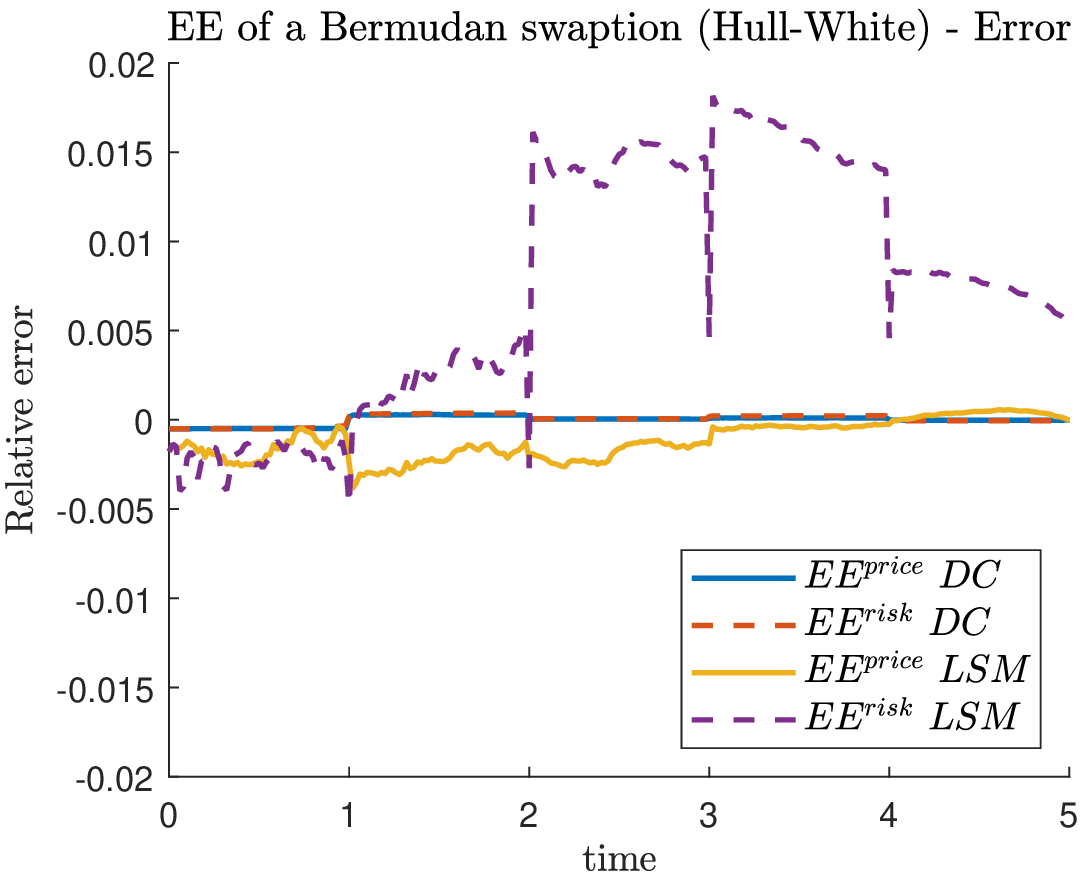}}
\end{minipage}%
\begin{minipage}{.5\linewidth}
\centering
\subfloat[]{\includegraphics[scale=.48]{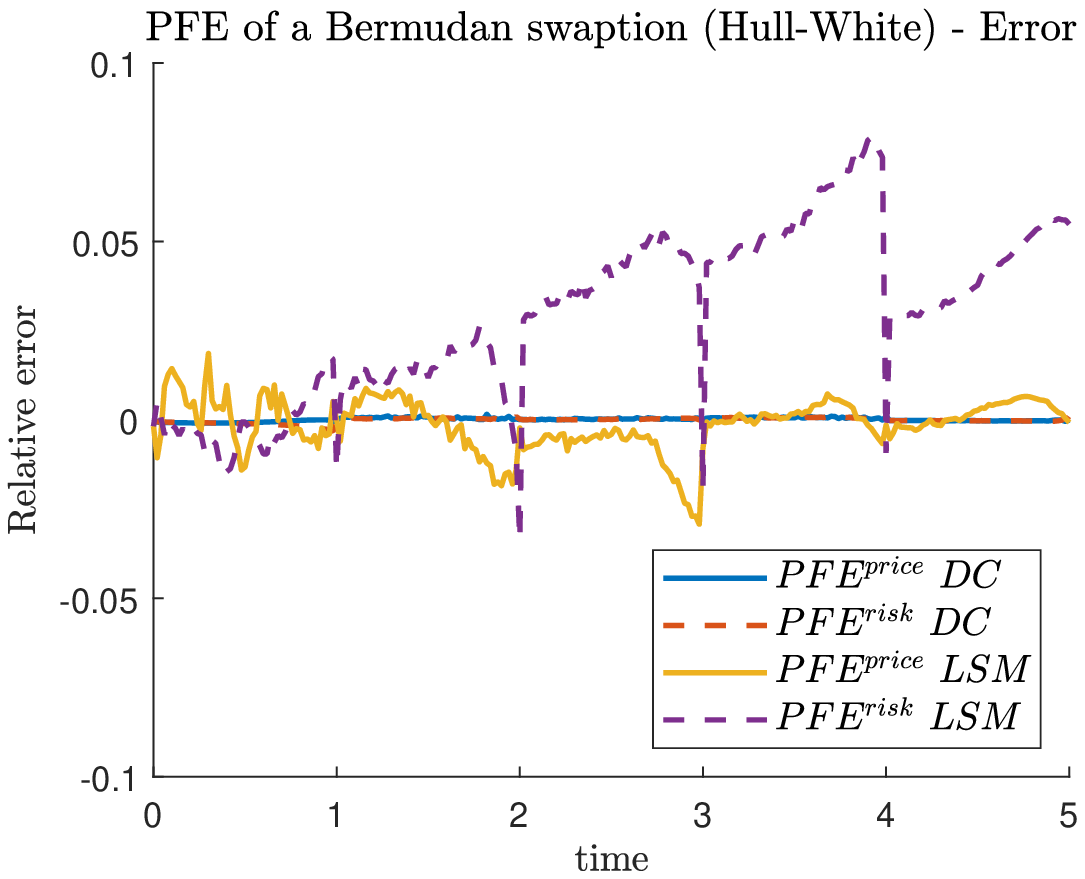}}
\end{minipage}
\caption{Relative error w.r.t. to the initial option price of the expected exposure (left figure) and potential future exposure (right figure) of a Bermudan receiver swaption in the Hull-White model of the dynamic Chebyshev method and the least-squares method. Calculated under the pricing measure $\mathbb{Q}$ and the real world measure $\mathbb{P}$ using $M=150000$ simulations.}
\label{fig:EE_PFE_Err_HW_Swaption_LSM}
\end{figure}

\begin{table}[H]
\begin{center}
\begin{tabular}{ccccc}
\hline 
Price & $EE^{price}_{T}$ & $PFE^{price}$ & $EE^{risk}_{T}$ & $PFE^{risk}$\\
5.4628 & 0.0771 & 1.2489 & 0.0540 & 0.8348\\
\hline 
\end{tabular} 
\caption{Reference values for option price, EE and PFE of a Bermudan receiver swaption in the Hull-White model using $M=150000$ simulations.}
\label{tab:EE_PFE_HW_Swaption_Reference} 
\end{center}
\end{table}

\begin{table}[H]
\begin{center}
\begin{tabular}{lccccc}
\hline 
 & Price & $EE^{price}$ & $PFE^{price}_{T}$ & $EE^{risk}$ & $PFE^{risk}_{T}$\\
\hline
$DC_{32}$ & 0.0148 & 0.0165 & 0.0444 & 0.0245 & 0.0351\\
$DC_{64}$ & 0.0026 & 0.0026 & 0.0069 & 0.0031 & 0.0166\\
$DC_{128}$ & 0.0005 & 0.0005 & 0.0016 & 0.0005 & 0.0032\\
$DC_{16, 16}$ & 0.0010 & 0.0022 & 0.0323 & 0.0034 & 0.0563\\
$DC_{32, 32}$ & 0.0005 & 0.0005 & 0.0015 & 0.0006 & 0.0026\\
$DC_{64, 64}$ & 0.0002 & 0.0002 & 0.0016 & 0.0002 & 0.0014\\
LSM & 0.0069 & 0.0277 & 0.1463 & 0.0331 & 0.1533\\
\hline 
\end{tabular} 
\caption{Maximal relative error w.r.t. to the initial stock price of option price, EE and PFE of a Bermudan put option in the Merton jump-diffusion model for $M=150000$ simulations. Comparison of the dynamic Chebyshev approach for different $N$ and the LSM approach with a full re-evaluation.}
\label{tab:EE_PFE_HW_Swaption} 
\end{center}
\end{table}

\begin{table}[H]
\begin{center}
\begin{tabular}{llcccccccc}
\hline 
& Sim. & $DC_{32}$ & $DC_{64}$ & $DC_{128}$ & $DC_{16,16}$ & $DC_{32,32}$ & $DC_{64,64}$ & LSM & $DC$ ref.\\  
\hline 
\multirow{2}{*}{$\mathbb{Q}$}
& $50k$ & 0.76s & 0.83s & 1.20s & 1.06s & 1.12s & 1.27s & 5.55s & 1.83s\\
& $150k$ & 2.58s & 2.83s & 3.56s & 3.06s & 3.22s & 3.55s & 8.56s & 4.77s\\
\multirow{2}{*}{$\mathbb{P}$}
& $50k$ & 0.66s & 0.76s & 1.16s & 0.97s & 1.02s & 1.17s & 5.47s & 1.88s\\
& $150k$ & 2.30s & 2.56s & 3.38s & 2.79s & 2.93s & 3.24s & 8.46s & 4.76s\\
\hline
\end{tabular} 
\caption{Runtimes of the exposure calculation of a barrier call option using the dynamic Chebyshev method for different $N$.}
\label{tab:EE_PFE_HW_Swaption_Runtime} 
\end{center}
\end{table}

\subsection{Summary of the experiments}
In this section, we analysed the dynamic Chebyshev method for credit exposure calculation numerically.

\cite{GlauMahlstedtPoetz2019} have validated the method for the pricing of options in different asset models. The experiments of this section show that the method is moreover well suited for credit exposure calculation of path-dependent options such as Bermudan and barrier equity options and Bermudan swaptions. Our examples show that the method can be applied to different models which require different numerical techniques for the calculation of the conditional expectations of the Chebyshev polynomials. 

The experiments show that the dynamic Chebyshev method is able to produce stable and accurate results even for the tail measure, the potential future exposure. It can handle the measure change from the pricing measure to the real-world measure without an additional computational effort and is therefore suited for the credit exposure calculation in both, pricing and risk management. 

The comparison with the popular LSM approach revealed the efficiency of the method in terms of accuracy versus runtime. This is especially the case for the computation of the potential future exposure. The LSM was not able to produce accurate prices in the tail for early-exercise options, which has also been observed in \cite{FengJainKarlssonKandhaiOosterlee2016}. Pricing methods based on Monte Carlo simulation and regression add additional simulation noise to the exposure calculation which is omitted in the new approach. Another methods that can improve the accuracy of the exposure in comparison to the LSM is the stochastic grid bundling method, as applied in \cite{KarlssonJainOosterlee2016} and \cite{FengJainKarlssonKandhaiOosterlee2016}. The experiments in \cite{FengJainKarlssonKandhaiOosterlee2016} however show, that this method is more than a factor of two times slower for the exposure calculation of a Bermudan swaption in the Hull-White model compared to a least-squares Monte Carlo approach. This allows us to conclude that the method will also be slower than the dynamic Chebyshev method.

Moreover, the experiments show that introducing an additional splitting in the dynamic Chebyshev method reduces the number of nodal points and can improve the efficiency of the exposure calculation further.  This is mainly interesting for large interpolation domains and early-exercise options.

\section{Conclusion and Outlook}
In this paper we have introduced a new method for the pricing and exposure calculation of European, Bermudan and barrier options based on the dynamic Chebyshev method of \cite{GlauMahlstedtPoetz2019}. The numerical experiments in Section 4 show that the method is well-suited for the exposure calculation and reveal several advantages.
\begin{itemize}
\item \textit{Flexibility}: The method offers a high flexibility, it applies to pricing and credit exposure calculation. The structure of the method allows us to explore additional knowledge of the model by choosing different techniques to compute the conditional expectations in a the pre-computation step. If the underlying  is conditionally normally distributed there is a closed form expression for these conditional expectations.
\item \text{Accuracy}: The method produces accurate exposure profiles and the approximation error is stable over the option's lifetime. In contrast to a least-squares Monte Carlo approach, the accuracy holds also in the tail of the distribution and for risk-measures such as the potential future exposure. The accuracy achieved is comparable to the one of a full re-evaluation.
\item \textit{Speed}: The comparison with the least-squares Monte Carlo approach showed the dynamic Chebyshev method is faster for all tested products and for both quantities, the expected exposure and the potential future exposure. As shown in \cite{FengJainKarlssonKandhaiOosterlee2016} the least-squares Monte Carlo is already competitive fast and outperforms the stochastic grid bundling method in terms of speed. For each simulated scenario, only a weighted sum of polynomials needs to be evaluated. Compared to a full re-evaluation repeated calls of a numerical pricing routine are avoided. This leads to a speed-up between one and two orders of magnitude in our experiments.
\end{itemize}
Overall, the combination of flexibility, accuracy and speed makes the dynamic Chebyshev method a highly efficient approach to compute credit exposure profiles for pricing and risk management.

Besides the confirmed quantitative advantages the method also admits several qualitative advantages. In Section 3 we provided an error analysis for the method which shows that the pricing error in the maximums norm decreases exponentially fast for analytic pricing functions. Moreover, Chebyshev interpolation exhibits algebraic convergence for differentiable functions and the derivatives converge as well. These estimates hold also true for the expected exposure and the potential future exposure.

The polynomial structure of the approximation of the value function allows not only a fast evaluation but enables also an efficient computation of the option's sensitivities Delta and Gamma in every time step. Additionally, the structure can be exploited to calculate the exposure of several options on the same underlying in one run of the price. Moreover, it allows the aggregation of credit exposures on different levels and facilitates the efficient computation of CVA and other risk metrics on a portfolio level. 

In this paper we focussed on the exposure calculation for products which depend only on one main risk factor. As a next step one can extend the presented approach for the exposure calculation to options which have more than one main risk factor. This can be tackled by tensorized Chebyshev interpolation for two or three risk factors. For higher dimensional problems, dimension reduction techniques such as sparse grids and low-rank tensor techniques can be exploited in the combination with Chebyshev interpolation. Moreover instead of multivariate Chebyshev interpolation other function approximation techniques can be used, for instance with kernel techniques We expect that these approaches achieve an accuracy comparable to a full re-evaluation in a significantly lower runtime.

The resulting reliable exposure profiles can further be used to learn counterparty credit risk measures on a portfolio level. The benefit of the presented method and its multivariate extensions would be the avoidance of nested Monte Carlo simulation in the trainings phase.

\appendix
\section{Proof of Proposition 4.2}
\begin{proof}
We define $\mu_{j}:=\EE[T_{j}(Y)\1_{[-1,1]}(Y)]$ as the generalized moments and $\mu^{\prime}_{j}=\EE[T^{\prime}_{j}(Y)\1_{[-1,1]}]$ as the expectations of the derivatives of the Chebyshev polynomials. The first three Chebyshev polynomials are given by $T_{0}(x)=1$, $T_{1}(x)=x$ and $T_{2}(x)=2x^{2}-1$ with derivatives $T_{0}^{\prime}(x)=0$, $T_{1}^{\prime}(x)=1=T_{0}(x)$ and $T_{2}^{\prime}(x)=4x=4T_{1}(x)$. This yields
\begin{align*}
\mu_{0}=\EE[1_{[-1,1]}(Y)]&=P(-1\leq Y\leq 1)=F(1)-F(-1).
\end{align*}
Before we consider the first moment we need the following property of the density $f$ of the normal distribution,
\begin{align*}
f^{\prime}(x)&=\frac{1}{\sqrt{2\pi}\sigma}e^{-\frac{(x-mu)^{2}}{2\sigma^{2}}}(-2\frac{(x-\mu)}{2\sigma^{2}})=f(x)(-2\frac{(x-\mu)}{2\sigma^{2}})=(-\frac{1}{\sigma^{2}})xf(x)+\frac{\mu}{\sigma^{2}}f(x),\\
&\text{and hence}\quad xf(x)=\mu f(x) - \sigma^{2}f^{\prime}(x).
\end{align*}
Using this property we obtain for the first moment $\mu_{1}=\EE[Y1_{[-1,1]}(Y)]$
\begin{align*}
\mu_{1}=\int_{-1}^{1}yf(y)\dy=\mu\int_{-1}^{1}f(y)\dy-\sigma^{2}\int_{-1}^{1}f^{\prime}(y)\dy=\mu\mu_{0}-\sigma^{2}(f(1)-f(-1)).
\end{align*}
Assume we know $\mu_{j},\mu^{\prime}_{j}$, $j=0,\ldots,n$. The Chebyshev polynomials and their derivative are recursively given by
\begin{align*}
T_{n+1}(x)=2xT_{n}(x)-T_{n-1}(x)\qquad T^{\prime}_{n+1}(x)=2(n+1)T_{n}(x)+\frac{n+1}{n-1}T^{\prime}_{n-1}(x).
\end{align*}
From the latter easily follows that
\begin{align*}
\mu^{\prime}_{n+1}&=\EE[T^{\prime}_{n+1}(Y)\1_{[-1,1]}(Y)]\\
&=2(n+1)\EE[T_{n}(Y)\1_{[-1,1]}(Y)]+\frac{n+1}{n-1}\EE[T^{\prime}_{n-1}(Y)\1_{[-1,1]}(Y)]\\
&=2(n+1)\mu_{n}+\frac{(n+1)}{(n-1)}\mu^{\prime}_{n-1}
\end{align*}
for $n\geq 2$. For the generalized moments we obtain
\begin{align*}
\mu_{n+1}=\EE[T_{n+1}(Y)\1_{[-1,1]}(Y)]=2\EE[Y T_{n}(Y)\1_{[-1,1]}(Y)]-\EE[T_{n-1}(Y)\1_{[-1,1]}(Y)].
\end{align*}
The second term is simply $\mu_{n-1}$ and for the first term we obtain
\begin{align*}
\EE[Y T_{n}\1_{[-1,1]}(Y)]&=\int_{-1}^{1}yT_{n}(y)f(y)\dy\\
&=\mu\int_{-1}^{1}T_{n}(y)f(y)\dy - \sigma^{2}\int_{-1}^{1}T_{n}(y)f^{\prime}(y)\dy\\
&=\mu\mu_{n} - \sigma^{2}\Big(T_{n}(y)f(y)\Big\vert_{-1}^{1}-\int_{-1}^{1}T^{\prime}_{n}(y)f(y)\dy\Big)\\
&=\mu\mu_{n}-\sigma^{2}\big(T_{n}(1)f(1)-T_{n}(-1)f(-1)-\mu^{\prime}_{n}\big).
\end{align*}
Altogether we obtain
\begin{align*}
\mu_{n+1}&=2\EE[Y T_{n}(Y)\1_{[-1,1]}(Y)]-\EE[T_{n-1}(Y)\1_{[-1,1]}(Y)]\\
&=2\big(\mu\mu_{n}-\sigma^{2}\big(T_{n}(1)f(1)-T_{n}(-1)f(-1)-\mu^{\prime}_{n}\big)\big)-\mu_{n-1}.
\end{align*}
It remains to find an expression for $\mu^{\prime}_{n}$.\\

We will prove by induction that
\begin{align}\label{eq:Cheby_deriv_recursive}
\mu^{\prime}_{n+1}=2(n+1)\sum_{j=0}^{n}{}^{'}\mu_{j}\1_{(n+j)\bmod 2=0}, \quad n\geq 0
\end{align}
where $\sum{}^{'}$ indicates that the first term is multiplied with $1/2$. For $n=0$, we obtain
\begin{align*}
\mu^{\prime}_{1}=2\sum_{j=0}^{0}{}^{'}\mu_{j}\1_{(0+j)\bmod 2=0}=2\frac{1}{2}\mu_{0}=\1_{0\bmod 2=0}=\mu_{0}.
\end{align*}
which shows \eqref{eq:Cheby_deriv_recursive}. Assume \eqref{eq:Cheby_deriv_recursive} holds for $j=0,\ldots,n$. Then we obtain
\begin{align*}
\mu^{\prime}_{n+1}&=2(n+1)\mu_{n}+\frac{(n+1)}{(n-1)}\mu^{\prime}_{n-1}\\
&=2(n+1)\mu_{n}+\frac{(n+1)}{(n-1)}2(n-1)\sum_{j=0}^{n-2}{}^{'}\mu_{j}\1_{(n-2+j)\bmod 2=0}\\
&=2(n+1)\Big(\mu_{n}\1_{(n+n)\bmod 2=0}+\mu_{n-1}\1_{(n+n-1)\bmod 2=0}+\sum_{j=0}^{n-2}{}^{'}\mu_{j}\1_{(n+j)\bmod 2=0}\Big)\\
&=2(n+1)\sum_{j=0}^{n}{}^{'}\mu_{j}\1_{(n+j)\bmod 2=0}.
\end{align*}
We use that $(n+j)\bmod 2=(2+j-2)\bmod 2$. For the generalized moments we thus obtain
\begin{align*}
\mu_{n+1}=2\mu\mu_{n} - 2\sigma^{2}\big(f(1)-f(-1)T_{n}(-1)-2n\sum_{j=0}^{n-1}{}^{'}\mu_{j}\1_{(n+j)\bmod 2=1}\big)-\mu_{n-1}
\end{align*}
which was our claim.
\end{proof}
\bibliographystyle{chicago}
  \bibliography{CVA_Literature}

\end{document}